\documentclass[aps,prl,twocolumn,showpacs,superscriptaddress,groupedaddress,floatfix,10pt]{revtex4-2}
\pdfoutput=1
\usepackage[utf8]{inputenc}
\usepackage[english]{babel}
\usepackage{amsmath,amssymb}
\usepackage{bm,bbm}
\usepackage{mathrsfs}
\usepackage{graphicx}
\usepackage{color}
\usepackage{mathtools}
\usepackage{amsfonts}
\usepackage{booktabs}
\usepackage{amsthm}
\usepackage{url}

\newcommand{\e}[1]{\mathrm{e}^{#1}}
\newcommand{\ee}{\mathrm{e}}
\newcommand{\eeq}{\mathrm{eq}}
\newcommand{\TT}{\tilde{T}}
\newcommand{\LL}{\hat{\mathcal{L}}}

\newcommand{\bx}{\mathbf{x}}
\newcommand{\by}{\mathbf{y}}
\newcommand{\bxx}{\mathbf{X}}
\newcommand{\br}{\mathbf{r}}
\newcommand{\brr}{\mathbf{R}}
\newcommand{\bq}{\mathbf{q}}
\newcommand{\BB}{\mathrm{B}}
\newcommand{\BA}{\mathbf{A}}

\newcommand{\bgam}{\boldsymbol{\Gamma}}
\newcommand{\bxi}{\boldsymbol{\Xi}}
\newcommand{\mpp}{\mathcal{P}}

\newcommand{\mD}{\mathcal{D}}
\newcommand{\mL}{\mathcal{L}}
\newcommand{\mU}{\mathcal{U}}
\newcommand{\mS}{\mathcal{S}}
\newtheorem{thm}{Theorem}

  \newcommand{\sump}{{\sum_{k}}'}
\newcommand{\prodp}{{\prod_{k}}'}

\begin{document}
\title{Faster uphill relaxation in thermodynamically equidistant temperature quenches}
\author{Alessio Lapolla}
\author{Alja\v{z} Godec}
\email{agodec@mpibpc.mpg.de}
\affiliation{Mathematical bioPhysics group, Max Planck Institute for
  Biophysical Chemistry, G\"{o}ttingen 37077, Germany}
  \begin{abstract}
%Relaxation processes are central to our understanding of condensed
%matter, single-molecule dynamics, and
%tracer-particle transport in complex media.
%The celebrated
%%Onsager-Casimir and Kubo-Yokota-Nakajima
%linear laws of irreversible thermodynamics predict physical
%observables sufficiently close to equilibrium to relax in a manner that is manifestly
%symmetric.
%Here w
We uncover an unforeseen asymmetry in relaxation -- for
a pair of thermodynamically equidistant temperature quenches, one from
a lower and the other from a higher temperature, the relaxation
at the ambient temperature is faster in case of the
former. We demonstrate this finding on hand of two exactly solvable %stochastic
many-body systems relevant in the context of single-molecule and
tracer-particle dynamics. We prove that near stable minima and for
all quadratic energy landscapes it is a general phenomenon that also
exists in a class of
non-Markovian observables probed in single-molecule and
particle-tracking experiments. The asymmetry is a
general feature of reversible overdamped diffusive systems with
smooth single-well potentials and occurs in multi-well landscapes when
quenches disturb predominantly intra-well equilibria. Our findings may be relevant for the optimization
of stochastic heat engines. 
% 'uphill' relaxation -- upon a quench from a lower temperature --  is
% always faster than 'downhill' relaxation -- upon a corresponding
% equidistant quench from a higher temperature. 
  \end{abstract}
\maketitle

  %\subparagraph{Introduction}
 Relaxation processes are a paradigm for condensed matter
 \cite{farhan_direct_2013,dattagupta_relaxation_2012}, single-molecule experiments \cite{Chen_2007}
 and tracer-particle transport in complex media \cite{Wang_2006,lizana_single-file_2008,lizana_diffusion_2009,Lapolla_2018,lapolla_manifestations_2019}. Relaxation close to equilibrium
 was described by the mechanical Onsager-Casimir
 \cite{onsager_reciprocal_1931,onsager_reciprocal_1931-1} and
thermal Kubo-Yokota-Nakajima \cite{Kubo_1957} linear laws.
% culminating in the so-called reciprocal relations
 These pioneering ideas were consistently generalized in numerous ways, most notably,
 to thermodynamics along individual stochastic trajectories driven far
 from equilibrium at weak \cite{seifert_stochastic_2012,
   jarzynski_equalities_2011} and strong \cite{Seifert_strong,Strasberg_2016,Massi,Jarzynski_strong,talkner2019}
 coupling with the bath, anomalous diffusion phenomena \cite{Ralf_1,Ralf_2,Ralf_3,Igor},
 %,
 %thermodynamic uncertainty relations bounding current
 %fluctuations in non-equilibrium steady states
 %\cite{Barato_2015,Pietzonka_2017,Horowitz_2017,Macieszczak_2018,Horowitz_2019,Seifert_2019},
 and the so-called 'frenesis' focusing on the dynamical
 activity -- a dynamic counterpart to changes in entropy \cite{Baiesi_2013,Maes_2017}. 
Many of these new concepts have been verified by and/or successfully applied
in experiments in colloidal systems \cite{Blickle_2006,Hoang_2018,Jeon_2013} and single-molecule
experiments on nucleic acids 
\cite{Collin_2005,Dieterich_2015,Camunas_Soler_2017} and larger
biomolecular machines \cite{Hayashi_2010}. 
 
 Not as much is known about transients, in particular those evolving from
   non-stationary initial conditions. Our present understanding
 of thermodynamics and in particular the kinetics in transient systems, reversible as well as irreversible, is mostly limited to small
 deviations from equilibrium
 \cite{onsager_reciprocal_1931,onsager_reciprocal_1931-1},
 non-equilibrium steady states
 \cite{Sasa,Maes_2011,Baiesi_2013,polettini_nonconvexity_2013,Maes_2019},
 and statistics of the 'house-keeping' heat \cite{Speck,Roldan} and
 entropy production \cite{Noh}. The
 r\^ole of initial conditions in relaxation was recently
 studied in the context of the 'Mpemba effect' -- the phenomenon where
 a hot system can cool down faster than the same system initiated at a lower temperature   \cite{lu_nonequilibrium_2017,klich_mpemba_2019}. Notable recent
advances include an information-theoretic bound on the entropy production
during relaxation far from equilibrium
\cite{shiraishi_information-theoretical_2019} and a spectral duality
between relaxation and first passage processes \cite{David,Hartich_2019}. 

It is meanwhile possible to probe the transient, non-equilibrium
dynamics of colloids and single molecules, e.g.
by temperature-modulated particle tracking \cite{Wang_2006},
time-\cite{Edgar_NP} and temperature-modulated \cite{Edgar_PRE}, 
temperature-jump \cite{delorenzo_2015} and holographic
\cite{Gladrow_2019} optical tweezers as well as optical pushing
\cite{pushing}.
These experiments allow for systematic investigations of the dependence of relaxation on the direction of
the displacement from equilibrium, which is the central question of the
present Letter.

Notwithstanding, the dependence of relaxation on the direction of the displacement from
equilibrium (see Fig.\ref{thermo:fig}) remains elusive. 
 Moreover, as a result of the projection to a
 lower-dimensional subspace it is expected that
 observables in many experiments, in particular those tracking
 individual particles  
 \cite{Wang_2006} and single-molecules \cite{delorenzo_2015,Gladrow_2019}, relax in a manner that is not Markovian \cite{lapolla_manifestations_2019}.
\begin{figure}%[ht!!]
%   \begin{center}
    \includegraphics[width=0.48\textwidth]{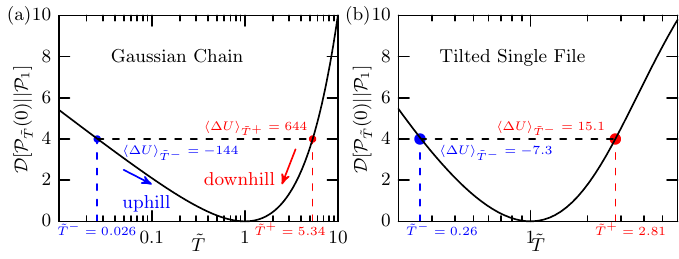}
    \caption{%The \emph{thermodynamic displacement from equilibrium}
      Non-equilibrium free energy
      after a temperature quench $T\to T_{\eeq}$ at time $t=0$ in units
      of $k_{\mathrm{B}}T_{\eeq}$,
      $\Delta
  F_{\TT}=\mathcal{D}[\mpp_{\TT}(t=0^+)||\mpp^\eeq_1]$ (see Eq.~(\ref{KLD})), as a function of the relative pre-quench
      temperature $\tilde{T}=T/T_{\eeq}$ (note the logarithmic
      scale); a) refers to the
      end-to-end distance of a Gaussian chain with 100
      beads and b) to the 7-th  in a single file of 10 particles in a
      linear potential with slope 10 confined
      to a unit box. The blue and red points depict
      a pair of thermodynamically equidistant temperature quenches, $\tilde{T}^-,\tilde{T}^+$,  with corresponding
      excess potential energies  
      $\langle \Delta U\rangle_{\TT^{\pm}} \equiv \langle
      U(0^+)\rangle_{\TT^{\pm}}-\langle U\rangle_1$.}
    \label{thermo:fig}
%   \end{center}
  \end{figure}

Here, we address relaxation from
an instantaneous temperature quench $T\to T_{\eeq}$ at time $t=0$ with respect to its directionality,
$T^-\uparrow T_\eeq$ versus $T^+\downarrow T_\eeq$. We uncover an unforeseen
 dependence on the direction of the quench --
 for a given pair of temperatures $T^-<T_{\eeq}<T^+$ at which the
   thermodynamic displacement from equilibrium at $t=0^+$ in the sense
   of $\mathcal{D}_{T^{\pm}}(0^+)$ -- the non-equilibrium free energy
   difference or 'lag'
   \cite{s._kullback_information_1951,lebowitz_irreversible_1957,Mackey_1989,Qian_2013,Massi_proof,Massi_PRL,Vaiku}--
%   measured
%   by the Kull
   is equal, i.e. $\mathcal{D}_{T^+}(0^+)=\mathcal{D}_{T^-}(0^+)$  (see Fig.~\ref{thermo:fig}),
 relaxation evolves, contrary to intuition, faster 'uphill' ($\langle \Delta U\rangle_{T^{-}}<0$)
 than 'downhill'  ($\langle \Delta U\rangle_{T^{+}}>0$)
 the energy landscape.
 This always holds for single-well potentials and occurs in
 near degenerate multi-well
 potentials with high energy barriers, under Markovian
 dynamics as well as for a class of non-Markovian observables
 %typically
 probed by single-molecule and particle-tracking experiments. 
 We demonstrate the asymmetry on hand of the
 Gaussian polymer chain
 \cite{doi_theory_1988}, single-file diffusion in a tilted box
 \cite{lapolla_manifestations_2019} and for diffusion in nearly degenerate
 multi-well potentials. For relaxation near a stable minimum and thus for all reversible Ornstein-Uhlenbeck processes we
 prove that the asymmetry, albeit counter-intuitive, is general.

 %\paragraph{Theory.--}
\emph{Theory.---} We consider $d$-dimensional Markovian diffusion
with a $d\times d$ symmetric positive-definite diffusion matrix
$\mathbf{D}$ and mobility tensor
$\mathbf{M}_T=\mathbf{D}/k_{\mathrm{B}}T$ in a drift field
$\mathbf{F}(\bx)$ such that $\mathbf{M}_T^{-1}\mathbf{F}(\bx)=-\nabla U(\bx)$
is a gradient flow. The evolution of
the probability density at temperature $T$
%, $P^T(\mathbf{x},t)$,
is governed by the Fokker-Planck operator $\hat{\mL}_T\equiv\nabla\cdot \mathbf{D}\nabla-\nabla\cdot
\mathbf{M}_T\mathbf{F}(\mathbf{x})$. We let $G_T(\mathbf{x},t|\mathbf{x}_0)$ be the Green's function of the
initial value problem
$(\partial_t-\hat{\mL}_{T})G_T(\mathbf{x},t|\mathbf{x}_0)=0$, 
and assume that the potential
$U(\mathbf{x})$ is confining (i.e. $\lim_{|\mathbf{x}|\to\infty}U(\mathbf{x})=\infty$).
This assures the existence
of an invariant Maxwell-Boltzmann measure with density
$\lim_{t\to\infty}G_T(\mathbf{x},t|\bx_0)\equiv
P_{T}^{\eeq}(\mathbf{x})=Q_T^{-1}\mathrm{e}^{-U(\mathbf{x})/k_{\mathrm{B}}T},\forall
\bx_0$ with partition function $Q_T=\int
\mathrm{e}^{-U(\mathbf{x})/k_{\mathrm{B}}T}d\mathbf{x}$.
%\footnote{If
%  $\mathbf{D}\ne D\mathbbm{1}$ we stretch all coordinates for which
%  $D_{ii}\ne 0$ according to
%\unexpanded{$x_i\to x_i/\sqrt{D_{ii}}$} \cite{risken_fokker-planck_1996}}.

The system is prepared at equilibrium with a
temperature $T$, $P_{T}^{\mathrm{inv}}(\mathbf{x})$, whereupon an instantaneous temperature quench
is performed to the ambient temperature $T_{\eeq}$ at $t=0$.
The relaxation
evolves at  $T_{\mathrm{eq}}$ according to $\hat{\mL}_{T_\eeq}$
%from the initial condition $P_{T}^{\mathrm{inv}}(\mathbf{x})$
and for a given system it is uniquely characterized by $T$. %Without any loss of generality we set $T_{\eeq}=1$
%(extensions to more parameters are straightforward).
For convenience we define $\tilde{T}\equiv
T/T_{\mathrm{eq}}$ \footnote{$T>T_{\eeq}$ implies $\TT>1$ and
  $T<T_{\eeq}$ implies $0<\TT<1$.}, such that
  \begin{equation}
   P_{\TT}(\bx,t)=\!\int\!\! d\bx_0
   G_1(\bx,t|\bx_0)P_{\TT}^{\eeq}(\bx_0)
   \xrightarrow[t\to\infty]{} P_{1}^{\eeq}(\bx).
      \label{density}
  \end{equation}
The instantaneous entropy and mean energy are given by
$S_{\TT}(t)\equiv-k_{\BB}\int d\mathbf{x} P_{\TT}(\mathbf{x},t)\ln P_{\TT}(\mathbf{x},t)$ and $\langle U(t)\rangle_{\TT}=\int d\mathbf{x} P_{\TT}(\mathbf{x},t)
U(\mathbf{x})$, respectively, where $\langle\cdot\rangle_{\TT}$ denotes an
average over all paths $\bx(t)$ starting from $P_{\TT}^{\mathrm{inv}}(\bx_0)$.

Let the measured physical observable be
$\bq=\bgam(\bx)$. Its probability density function corresponds to \cite{lapolla_manifestations_2019}
 \begin{equation}
 \mpp_{\TT}(\bq,t)=\hat{\Pi}_{\mathbf{x}}(\mathbf{q})P_{\TT}(\bx,t)\!\equiv\!\!\int\!\!
 d\bx\delta(\bgam(\bx)-\bq)P_{\TT}(\bx,t),
\label{nMark} 
  \end{equation}
which in general displays non-Markovian dynamics as soon as
$\bq$ corresponds to a low-dimensional projection
\cite{lapolla_manifestations_2019}. Once equilibrium is reached
 we have
 $\lim_{t\to\infty}\mathcal{P}_{\TT}(\bq,t)=\mathcal{P}^{\mathrm{eq}}_{1}(\bq)$,
or, expressed via the so-called potential of mean
 force $\mU(\bq)$ \footnote{The name comes from the fact that
   $\mU(\bq)$ delivers the mean force, i.e. \unexpanded{$-\nabla_{\bq}
     \mU(\bq)=-\langle \nabla_{\bx}U(\bx)\delta(\bgam(\bx)-\bq)\rangle$}.}, $\mathcal{P}^\mathrm{eq}_1(\bq)=\ee^{-\beta_\eeq \mU(\bq)}$ \cite{Kirkwood,Seifert_strong,Jarzynski_strong}. Obviously,
when $\bgam(\bx)=\bx$ we have $\mpp_{\TT}(\bq,t)=P_{\TT}(\bx,t)$.

We quantify the instantaneous displacement from equilibrium
with the Kullback-Leibler divergence \cite{s._kullback_information_1951,lebowitz_irreversible_1957,Mackey_1989,Qian_2013,Massi_proof,Massi_PRL,Vaiku}
\begin{equation}
  \mathcal{D}[\mpp_{\TT}(t)||\mpp^{\eeq}_1]=\int d\bq
%  \mpp_{\TT}(\bq,t)\ln\left(\frac{\mpp_{\tilde{T}}(\bq,t)}{\mpp_\eeq(\bq)}\right).
    \mpp_{\TT}(\bq,t)\ln(\mpp_{\tilde{T}}(\bq,t)/\mpp^{\eeq}_1(\bq)).
   \label{KLD}
\end{equation}
Writing this out for the Markovian case
we find, upon identifying $S_{\TT}(t)$ and $\langle U_{\TT}
 (t)\rangle$ 
\begin{equation}
\mathcal{D}[P_{\TT}(t)||P^{\eeq}_1]=-S_{\TT}(t)/k_{\mathrm{B}}+\beta_{\eeq}\langle U(t)\rangle_{\TT}
 + \ln  Q_{T_\eeq}
\label{KLDM}.
\end{equation}
Recalling the definition of free energy
$F=-\beta_{\eeq}^{-1}\ln Q_{T_\eeq}$ and 
defining the instantaneous generalized free energy (GFE)
\cite{Qian_2013} or 'lag' \cite{Vaiku} as
  $F_{\TT}(t)=\langle U(t)\rangle_{\TT} -T_{\eeq}S_{\TT}(t)$
 we see, upon multiplying through by $\beta_{\eeq}^{-1}=k_{\mathrm{B}}T_\eeq$, that in the
  Markovian case Eq.~(\ref{KLD}) is the
  excess GFE in units of $k_{\mathrm{B}}T_{\eeq}$, i.e. $\mathcal{D}_{\TT}^M(t)\equiv\mathcal{D}[P_{\TT}(t)||P_1^{\eeq}]=\beta_{\eeq}(F_{\TT}(t)-F)$ \cite{Mackey_1989,Qian_2013}. 
Writing out Eq.~(\ref{KLD}) for the
non-Markovian case and identifying $\mS_{\TT}(t)$ and $\mU(\bq)$
(calligraphic letters denote
potentials of projected observables)
we find
\begin{equation}
\mathcal{D}_{\TT}^{nM}(t)\equiv\mathcal{D}[\mpp_{\TT}(t)||\mpp_1^{\eeq}]=\!\!-\mathcal{S}_{\TT}(t)/k_{\mathrm{B}}+\beta_\eeq\langle \mU(t)\rangle_{\TT}
\label{KLDMN},
\end{equation}
 which is the non-Markovian GFE,
   $\mathcal{D}_{\TT}^{nM}(t)=\beta_{\eeq}\mathcal{F}_{\TT}(t)$. Note
  that  $\mU(\bq)$ itself is an effective free energy,
    i.e. $\beta_\eeq\mU(\bq)\equiv-\ln\langle\delta(\bgam(\bx)-\bq)\rangle_{1}=-\ln\int d\bx\delta(\bgam(\bx)-\bq)\ee^{-\beta_\eeq
    U(\bx)}+\ln Q_{T_\eeq}$ and $\mS_1=-\langle\mU\rangle_1$.
  We henceforth express energies in
  units of $k_{\mathrm{B}}T_{\mathrm{eq}}$.  
If (and only if) latent degrees of freedom (i.e. those integrated out) relax much faster than
$\bq(t)$,
Eqs.~(\ref{KLDM}) and (\ref{KLDMN}) are equivalent and $\bq(t)$ is a Markovian diffusion in the free energy
landscape $\mU(\bq)$  \cite{lapolla_manifestations_2019}. In absence
of a time-scale separation, however, both $\mS_{\TT}(t)$ and
$\langle\mU(t)\rangle_{\TT}$ contain contributions from the (hidden)
relaxation of the latent degrees of freedom.

  Consider now a pair of temperatures $\TT^+>1$ and $\TT^-<1$ corresponding
  to equal displacements immediately after the quench: $\mathcal{D}^{M,nM}_{\TT^-}(0^+)=\mathcal{D}^{M,nM}_{\TT^+}(0^+)$. The
  existence of (at least) two such temperatures is guaranteed within
  an interval $\TT\in (\TT_{\mathrm{min}},\TT_{\mathrm{max}})$ where
  $\mathcal{D}^{M,nM}_{\TT}(0^-)=f(\TT)$ has no local maximum. The central
  question of this Letter addresses the rate of the 'uphill'
  ($\TT^-<1$) versus 'downhill' ($\TT^+>1$) relaxation.
  
%\paragraph{Gaussian Chain.--}
\emph{Gaussian Chain.---} In the context of single-molecule
experiments we consider the overdamped dynamics of a chain of $N+1$ beads with coordinates $\{\br_i\}$ connected by harmonic
springs with potential $U(\{\br_i\})=\sum_{i=1}^N(\br_{i+1}-\br_i)^2$
(general harmonic networks are treated in the SM). In the Markovian setting we
consider all monomers, $P_{\TT}(\{\br_i\},t)$ in Eq.~(\ref{density}), while single-molecule experiments (e.g. FRET \cite{yang_single-molecule_2002,joo_advances_2008} or
optical tweezers  \cite{delorenzo_2015,Gladrow_2019})  typically
track a single (e.g end-to-end) distance within the macromolecule,
$\bq\equiv d=|\br_1-\br_N|$ with $\mathcal{P}_{\TT}(d,t)$ from Eq.~(\ref{nMark}), evolving
according to non-Markovian dynamics. 
%The time-dependent probability density
%function of $d$ can be calculated exactly (see derivation in
%SM) and, introducing $\mathcal{A}_{\TT}^{ij}(t)\equiv\sum_{k=1}^N\Lambda_k^{\TT}%(t)(\BA_{ik}-\BA_{jk})^2/2\mu_k$, reads 
%\begin{equation}
%  %\mathcal{P}_{\TT}(d,t)=\frac{d^2}{2\sqrt{\pi}}\frac{\ee^{-d^2/4\mathcal{A}^{ij}%_{\TT}(t)}}{\mathcal{A}^{ij}_{\TT}(t)^{3/2}}.
%  \mathcal{P}_{\TT}(d,t)=\frac{d^2}{2\sqrt{\pi}}\mathcal{A}^{1N}_{\TT}(t)^{-3/2}\e%e^{-d^2/4\mathcal{A}^{1N}_{\TT}(t)}.
%\label{Rouse1}
%\end{equation}
%Note that immediately after the quench
%$\mathcal{P}_{\TT}(d,0)\equiv \mathcal{P}^{\eeq}_{\TT}(d)$, whereas
%in equilibrium
%$\lim_{t\to\infty}\mathcal{P}_{\TT}(d,t)=\mathcal{P}^{\mathrm{eq}}_{1}(d)$. 
%In the Markovian setting all eigenmodes
%relax independently (see Eq.~(\ref{RouseM})), whereas the projection
%onto a single distance mixes the modes in a non-trivial manner (see
%Eq.~(\ref{Rouse1})%).
  \begin{figure*}[ht!!]
  \begin{center}
   \includegraphics[width=0.98\textwidth]{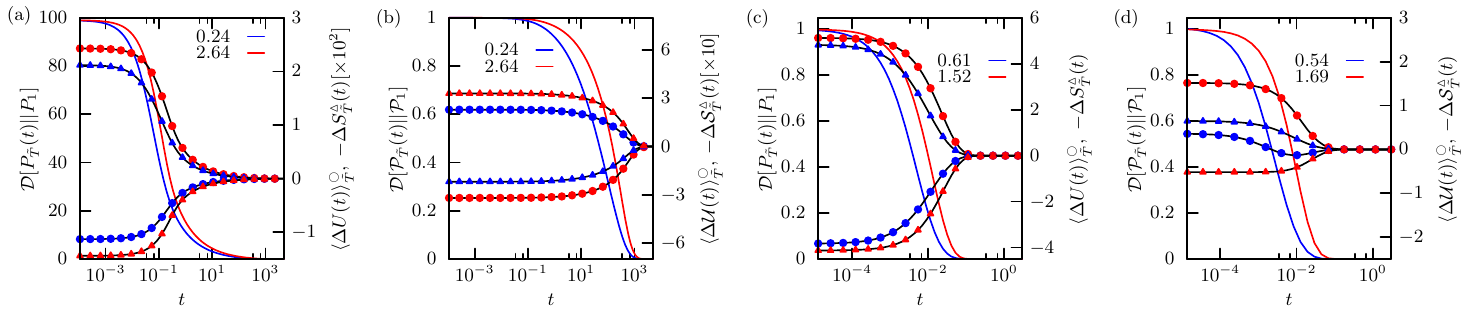}%{decaypotentialschain}
   \caption{%Generalized instantaneous excess free energy
      $\mathcal{D}[\mpp_{\TT}(t)||\mpp^{\eeq}_1]$ (full lines) for the Gaussian chain (a and b) and
     single-file with 10 particles in a linear potential with slope $g=10$ (s and d). 
     a) refers to the entire chain of 100 beads (Eq.~(\ref{KLDM})) and
     b) to the end-to-end distance (Eq.~(\ref{KLDMN})) for equidistant quenches from $\TT^-=0.24$
     (blue) and $\TT^+=2.64$ (red); c) stands for the full single file
     for equidistant quenches from $\TT^-=0.61$
     (blue) and $\TT^+=1.52$ (red);  d) the 7-th particle for equidistant quenches from $\TT^-=0.54$
     (blue) and $\TT^+=1.69$. 
     The circles refer to
     $\langle \Delta U(t) \rangle_{\TT^{\pm}}$ and $\langle\Delta \mU(t)
     \rangle_{\TT^{\pm}}$ in a and c, and b and d, respectively, and triangles
     denote $\Delta S_{\TT^{\pm}}(t)$ and $\Delta \mS_{\TT}(t)$. Note the
     second axes for $\langle \Delta U(t) \rangle_{\TT^{\pm}},\langle \Delta\mU(t)
     \rangle_{\TT^{\pm}}$ and $\Delta
     S_{\TT^{\pm}}(t),\Delta\mS_{\TT}(t)$. Note that $\mS_{\TT}(\infty)=\mS_1=-\langle\mU\rangle_1$.} 
   \label{energydecay}
   \end{center}
  \end{figure*}
  
The excess GFE is given by
(see derivation in the SM)
\begin{eqnarray}
\label{KLRouse}   
\mathcal{D}^M_{\TT}(t)&=&\frac{3}{2}\sum_{k=1}^N\left[\Lambda_k^{\TT}(t)-1-\ln
  \Lambda_k^{\TT}(t)\right]\\
\mathcal{D}_{\TT}^{nM}(t)&=&\frac{3}{2}\left[\frac{\mathcal{A}_{\TT}^{1N}(t)}{\mathcal{A}_{1}^{1N}(0)}-1-\ln\frac{\mathcal{A}_{\TT}^{1N}(t)}{\mathcal{A}^{1N}_1(0)}\right],
%\tcb{\mathcal{D}_{\TT}^{nM}(t)}&=&\frac{3}{2}\left[\mathcal{A}_{\TT}^{1N}(t)/\mathcal{A}_{1}^{1N}(0)-1-\ln(\mathcal{A}_{\TT}^{1N}(t)/\mathcal{A}^{1N}_1(0))\right],
\label{KLRouseN}  
\end{eqnarray}
where $\Lambda_k^{\TT}(t)\equiv 1+(\TT-1)\e{-2\mu_kt}$ with $\mu_k=4\sin^2(\frac{k\pi}{2(N+1)})$ and
we introduced $\mathcal{A}_{\TT}^{ij}(t)\equiv\sum_{k=1}^N\Lambda_k^{\TT}(t)\mathcal{C}_k^{ij}/2\mu_k$
with $\mathcal{C}_k^{ij}\ge 0$ given explicitly in the SM. The initial
excess free energies are both
convex in $\TT$ and read
\begin{equation}
%\delta F_{\TT}(0)=
\mathcal{D}_{\TT}^M(0^+)=3N(\TT-1-\ln\TT)/2=N\mathcal{D}_{\TT}^{nM}(0^+).%\nonumber \\
%\delta F^d_{\TT}(0)&=&\mathcal{D}[\mathcal{P}_{\TT}(0)||\mathcal{P}_\eeq]=\frac{3}{2}(\TT-1-\ln\TT)
\label{KLRouse_I}  
\end{equation}
%and $\delta\mathcal{F}_{\TT}(0)=\mathcal{D}[P_{\TT}(0)||P_1^{\eeq}]/N$.
The instantaneous potential
  energy of the full system and the
  potential of mean force in turn read
$\langle U(t)  \rangle_{\TT}=\frac{3}{2}\sum_{k=1}^{N}\Lambda_k^{\TT}(t)$ and 
  $\mU(d)=-\ln \mpp_1^{\mathrm{eq}}(d)$, respectively.  
  Aside from specific values of $\mu_k$ and $\mathcal{C}^{ij}_k$
  Eqs.~(\ref{KLRouse}-\ref{KLRouse_I}) hold for any reversible
  Ornstein-Uhlenbeck process, that is for
  any $\TT$, connectivity/topology, and tagged distance. 

  The results
  for $\mathcal{D}_{\TT}^{M,nM}(t)$ and their decomposition into $\langle U\rangle_{\TT},\langle \mU(t)\rangle_{\TT},S_{\TT}(t),$
  and $\mS_{\TT}(t)$ for a pair of equidistant temperature quenches
 are shown in Fig.~\ref{energydecay} and demonstrate that the uphill
 relaxation is always faster than downhill relaxation.
As we prove below
this is true for any reversible Ornstein-Uhlenbeck process (OUp)
quenched arbitrarily far from equilibrium.

The energy and entropy differences relative to their
  equilibrium values (i.e. at $t=\infty$) in Fig.~\ref{energydecay}a
  suggest that the Markovian uphill and downhill relaxation are
dominated by $\langle \Delta U(t)
\rangle_{\TT^{+}}$ and $\Delta S_{\TT^{-}}$, respectively.
Surprisingly, entropy pushing the
system uphill against the deterministic force is more
efficient. Notably, the magnitude of individual %energetic and entropic
  contributions is smaller for uphill relaxation,
  i.e. $\langle \Delta U
\rangle_{\TT^{+}}>-\langle \Delta U
\rangle_{\TT^{-}}$ and $\Delta S_{\TT^{+}}>-\Delta S_{\TT^{-}}$. Thus,
a larger energy excess and entropy deficit are dissipated
during downhill relaxation. Conversely, the partitioning into $\mS_{\TT}(t)$ and
  $\langle\mU(t)\rangle_{\TT}$ of the non-Markovian relaxation depends
  on the details of the projection and is less intuitive (in our
  example in Fig.~\ref{energydecay}b it is in fact reversed).
  \begin{figure}[ht!!]
% \begin{}
   \includegraphics[width=8.5cm]{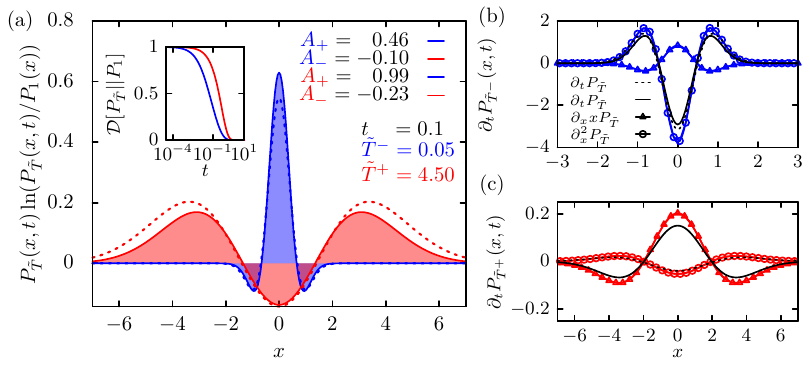}%{decaypotentialschain}
   \caption{a) Integrand of Eq.~(\ref{KLD}) at $t=0.1$
     %depicting the local
     %contributions to $\mathcal{D}[P_{\TT}(t)||P^{\eeq}_1]$
     for a
     one-dimensional OUp (full line)
     for uphill (blue) and downhill (red) relaxation with the positive $A_+$ and negative $A_-$ area
     under the curve; Inset: The corresponding
     $\mathcal{D}_{\TT}^M(t)$. b-c) Decomposition of $\partial_tP_{\TT}(x,t)$ into 
diffusive $\partial^2_xP_{\TT}$ (circles) and advective $\partial_xx P_{\TT}$
     (triangles) contribution for uphill (b) and downhill (c)
relaxation. Dashed lines correspond to free
     diffusion evolving from the same initial condition.}
   \label{explain}
 %  \end{flushleft}
  \end{figure}

To explain why uphill relaxation is faster we
  inspect in Fig.~\ref{explain} local contributions to
  $\mathcal{D}_{\TT}^M(t)$ for a one-dimensional OUp.
  %Since $P_{\TT}(x,t)\ge 0,\forall x$ the
%integrand in Eq.~(\ref{KLDM}) is negative when
%$P_{\TT}(x,t)<P^{\eeq}_1(x)$.
  An uphill quench localizes $P_{\TT^-}(x,0^+)$ near the origin, whereas a
downhill quench broadens $P_{\TT^+}(x,0^+)$ rendering the
integrand of Eq.~(\ref{KLDM}) non-zero over a larger domain %range of $|x|$
(Fig.~\ref{explain}a, red line).
%  causes $P_{\TT^-}(x,t)>P^{\eeq}_1(x)$
%near $x=0$ and $P_{\TT^-}(x,t)<P^{\eeq}_1(x)$ at larger $|x|$ (Fig.~\ref{explain}a, blue line). The opposite occurs upon a
%downhill quench, where in addition the
%integrand is non-zero over a larger range of $|x|$
%(Fig.~\ref{explain}a, red line).
The evolution of
$P_{\TT}(x,t)$ is driven by diffusion
$\propto \partial^2_xP_{\TT}$ and advection $\propto
\partial_xx P_{\TT}$. By forcing probability mass towards the origin
advection seems to oppose uphill relaxation (triangles in Fig.~\ref{explain}b) but thereby actually
sustains an even faster diffusion rate compared to free diffusion
(compare circles and dashed line in Fig.~\ref{explain}b). The net effect
is an overall relaxation nearly as fast as free diffusion (compare full
and dashed line in Fig.~\ref{explain}b). Downhill relaxation is
advection-dominated and weakly opposed by
diffusion, which is almost unaffected by the potential
(Fig.~\ref{explain}c). The overall dynamics is much slower (compare
full lines in Fig.~\ref{explain}b \& c). Faster diffusion from a
localized initial distribution thereby renders uphill relaxation
faster -- an effect that will exist in any confining potential well
with ruggedness $\ll k_{\rm B}T_{\eeq}$. $\hat{\mL}_T$ of any reversible OUp is diagonalizable and thus
uniquely decomposable into one-dimensional OUps, %, which
extending our explanation to arbitrary dimensions.
  
Non-Markovian relaxation displays the same asymmetry
  but the dominant driving forces, here $\mS_{\TT}$ and
  $\langle\mU_{\TT}\rangle$, %can
 % , depending on the details of  the projection, become
may become reversed (see
Fig.~\ref{energydecay}b). Since $\mU_{\TT}$ contains entropic effects
of latent degrees of freedom the partitioning between $\mS_{\TT}$ and
$\langle\mU_{\TT}\rangle$ is in general projection-dependent.

  \emph{Tilted Single File.---} In the context of tracer-particle
  dynamics we consider $N$ hard-core Brownian point particles with positions $\{x_i(t)\}$ (the extension to a finite
  diameter is straightforward \cite{lizana_diffusion_2009,
    lizana_single-file_2008}) diffusing in a box of unit
  length in the presence of a linear potential (e.g. the gravitational
  field),
  $U(\{x_i\})=\sum_{i=1}^N g x_i$. The probability density
    of $\{x_i(t)\}$  upon a
    quench from $\TT$, $P_{\TT}(\{x_i(t)\},t)$, evolves according to %the  Fokker-Planck operator
  $\hat{\mathcal{L}}_1=\sum_{i=1}^N(\partial_{x_i}^2+g\partial_{x_i})$ under non-crossing conditions
  \cite{Lapolla_2018,lapolla_manifestations_2019}.
  In the SM we solve the problem exactly
  via the coordinate Bethe ansatz
  \cite{Lapolla_2018,lapolla_manifestations_2019}, both for the
    Markovian complete single file and the non-Markovian probability density of a tagged
  particle $\mathcal{P}_{\TT}(z,t)$ (i.e. $\bq\equiv x_{\mathcal{T}}=z$).

  $\mathcal{D}^M_{\TT}(t)$ with
  corresponding $\langle \Delta U(t)\rangle_{\TT},\langle
  \mU(t)\rangle_{\TT},\Delta S_{\TT}(t)$ and $\mS_{\TT}(t)$
  %as a function of time
  for the complete and tagged particle dynamics
  are shown in
  Fig.~\ref{energydecay} c and d. As for the Gaussian
  chain uphill relaxation in the tilted single file, both full as well
  as for a tagged particle, seems to always be faster, irrespective of which particle we tag and for
  any $\TT, N$ and tilting strength $g> 0$ (see also SM).
The Markovian uphill
relaxation is dominated by $\Delta S_{\TT^-}(t)$
and downhill by $\langle \Delta U(t)\rangle_{\TT^+}$,  and a larger energy and entropy difference
must be dissipated during downhill relaxation (see
Fig.~\ref{energydecay}c). For a tagged-particle
%particle uphill relaxation is dominated by $\langle
  %\mU(t)\rangle_{\TT^-}$ while uphill relaxation  is dominated by
  %$\mS(t)_{\TT^-}$ -- this contrasts the Gaussian chain. In fact,
  the partitioning between $\langle
  \mU(t)\rangle_{\TT^-}$ and $\mS(t)_{\TT^-}$ varies
  depending on which particle we tag as a result of the shape of
  $\mU(z)$ and the dependence of $\mpp_{\TT}(z,0^-)$ on $\TT$, which in turn both depend on the tagged particle as
  well as $T_\eeq, N$ and $g$.

  \emph{Is the asymmetry universal?---} We first focus on dynamics
  near a stable minimum at $\brr_0$, $\delta\brr(t)=\brr(t)-\brr_0$, which is well described by an
  Ornstein-Uhlenbeck process (OUp), i.e. $d\delta
  \brr(t)=\mathrm{\mathbf{H}}\delta\brr(t) dt +
  \sqrt{2}d\mathbf{W}_t$, where $(\mathrm{\mathbf{H}})_{ij}=\sum_{ij}\partial_{R_i}\partial_{R_j}U(\brr)|_{\brr_0}$ is the
  Hessian. %Therefore we have the following
\begin{thm}
  For a general diffusion sufficiently close to
  a stable minimum and for any stable reversible OUp the relaxation from
  a pair of equidistant quenches of arbitrary magnitude (as defined above) is
  always faster uphill.
%  \vspace{-0.2cm}
\end{thm}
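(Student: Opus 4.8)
\emph{Proof strategy.---} The plan is to reduce the statement to a one-variable inequality. First I would observe that near a stable minimum the relaxation dynamics is exactly the reversible OUp of the theorem, whose generator is diagonalizable into one-dimensional modes with decay rates $\mu_k>0$; an equilibrium Gaussian at temperature $T$ then evolves, under the $T_\eeq$-dynamics, into a Gaussian whose $k$-th modal variance is its $T_\eeq$-equilibrium value times $\Lambda_k^{\TT}(t)=1+(\TT-1)\e{-2\mu_k t}$. Consequently $\mathcal{D}_{\TT}^{M}(t)$ and $\mathcal{D}_{\TT}^{nM}(t)$ are the analogues of Eqs.~(\ref{KLRouse})--(\ref{KLRouse_I}), built from $g(x)\equiv x-1-\ln x$ (strictly convex, $g(1)=g'(1)=0$) evaluated at $\Lambda_k^{\TT}(t)$. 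I would then collapse both onto a single scalar $s\in[0,1]$: writing $s_k\equiv\e{-2\mu_k t}$, the Markovian divergence is $\propto\sum_k g(1+(\TT-1)s_k)$, while the non-Markovian one equals $\tfrac{3}{2}\,g(1+(\TT-1)\bar s(t))$ with $\bar s(t)\equiv[\sum_k(\mathcal{C}_k^{1N}/2\mu_k)\e{-2\mu_k t}]/[\sum_k\mathcal{C}_k^{1N}/2\mu_k]\in(0,1)$ a convex combination of the modal decays (here $\mathcal{C}_k^{ij}\ge0$ is essential); in all cases $s=1\leftrightarrow t=0^+$ and $s=0\leftrightarrow t=\infty$.

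Next I would set $a\equiv\TT^+-1>0$, $b\equiv1-\TT^-\in(0,1)$ and define $h(s)\equiv g(1+as)-g(1-bs)$ on $[0,1]$. Then $h(0)=0$ automatically (both systems equilibrate), while $h(1)=g(\TT^+)-g(\TT^-)=0$ is exactly the equidistance hypothesis $\mathcal{D}^{M,nM}_{\TT^+}(0^+)=\mathcal{D}^{M,nM}_{\TT^-}(0^+)$. The whole theorem reduces to the claim $h(s)>0$ for all $s\in(0,1)$: plugging in $s=s_k$ and summing (Markovian) or $s=\bar s(t)$ (non-Markovian) gives $\mathcal{D}_{\TT^+}(t)-\mathcal{D}_{\TT^-}(t)>0$, i.e.\ strictly smaller non-equilibrium free energy --- faster relaxation --- uphill at every $t\in(0,\infty)$.

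To prove $h>0$ on $(0,1)$ I would compute, from $g'(x)=1-1/x$, that $h'(s)=s(a+b)\,[(a-b)-abs]/[(1+as)(1-bs)]$, whose denominator is positive on $(0,1)$ since $b<1$; hence $\operatorname{sign}h'(s)=\operatorname{sign}[(a-b)-abs]$, affine in $s$ with strictly negative slope $-ab$. If this affine function kept one sign on $(0,1)$, $h$ would be strictly monotone on $[0,1]$, so $h(1)\ne h(0)$, contradicting $h(0)=h(1)=0$; therefore it vanishes once, at $s^\ast\in(0,1)$, being positive on $(0,s^\ast)$ and negative on $(s^\ast,1)$. Thus $h$ rises from $0$ and returns to $0$, forcing $h>0$ in between. (As a by-product, equidistance itself forces $\TT^+-1>1-\TT^-$, which reflects the asymmetry $g(1-u)>g(1+u)$ of the ``relaxation potential'' $g$, easily checked since $\tfrac{d}{du}[g(1-u)-g(1+u)]=2u^2/(1-u^2)>0$.)

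I expect the last step to be the crux: no termwise or global sign of $h$ or $h'$ is available a priori, and the asymmetry is genuinely invisible without the equidistance constraint --- it emerges only because the vanishing of $h$ at the \emph{initial} endpoint $s=1$ is incompatible with $h$ being monotone, which pins down the sign pattern of $h'$. The remaining ingredients (linearization near the minimum, modal decomposition of a reversible OUp, and the reduction of Eqs.~(\ref{KLRouse})--(\ref{KLRouseN}) to $s$) are routine within the framework already established; and since only $\Lambda_k^{\TT}(t)\ge0$, $\mu_k>0$ and $\mathcal{C}_k^{ij}\ge0$ enter, the argument applies verbatim to any stable reversible OUp, arbitrary quench magnitude, and any tagged coordinate --- hence the claimed generality.
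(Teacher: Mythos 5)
Your proof is correct, and while it starts from the same modal formulas (Eqs.~(\ref{KLRouse})--(\ref{KLRouse_I})), the key inequality is established by a genuinely different argument. The paper sets $\varphi=\TT^+/\TT^-$, writes the per-mode difference as $\ln Z_{\varphi}(\mu_k t)$, substitutes the equidistance identity $\TT^+-\TT^-=\ln(\TT^+/\TT^-)$ into the exponent, and squeezes $Z_{\varphi}\ge 1$ using the two generalized Bernoulli inequalities. You instead package each mode into $h(s)=g(1+as)-g(1-bs)$ with $g(x)=x-1-\ln x$ and $s=\ee^{-2\mu_k t}$, note that equidistance is precisely the endpoint condition $h(1)=0$, and conclude $h>0$ on $(0,1)$ because $h'$ has an affine numerator of strictly negative slope and hence changes sign at most once, which is incompatible with $h(0)=h(1)=0$ unless $h$ rises and then falls --- a Rolle-type argument that uses the constraint only through a boundary value rather than through an algebraic substitution; your computation of $h'$ checks out. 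Where your route clearly buys something is the non-Markovian case: since $\mathcal{A}_{\TT}^{ij}(t)/\mathcal{A}_{1}^{ij}(0)=1+(\TT-1)\bar{s}(t)$ with $\bar{s}(t)\in(0,1)$ a $\TT$-independent convex combination of the $\ee^{-2\mu_k t}$ (exactly where $\mathcal{C}_k^{ij}\ge 0$ is needed), the same scalar inequality $h(\bar{s}(t))>0$ disposes of it in one line, whereas the paper differentiates $\Delta\mD(t)$ (Eq.~(\ref{proofN})) and compares the $\mathcal{A}$'s at different times, a step whose sign bookkeeping is delicate (the monotonicity inequalities as printed there appear reversed). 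Your version also delivers strictness of the asymmetry for all $0<t<\infty$ and the corollary $\TT^+-1>1-\TT^-$ for free.
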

%\noindent \emph{\underline{Proof:}}
\begin{proof}
\textcolor{red}{\textbf{(See Erratum)}} Any pair $0<\TT^-\le1$ and
    $1\le \TT^+<\infty$ with $\mathcal{D}_{\TT^+}(0^+)=\mathcal{D}_{\TT^-}(0^+)$ satisfies by construction   
  $\TT^+-\TT^-=\ln(\TT^+/\TT^-)$. 
  %we find from Eq.~(\ref{KLRouse_I}) that $\TT^+(\TT^-)=-W_{-1}(-\TT^-\ee^{-\TT^-})$ independent of the
  %force matrix/Hessian, where $W_{-1}(x)$ defined for $x\in[-\ee^{-1},0)$ denotes the second real branch of the Lambert-W function (for
  %details see SM). By
  %and therefore %$\TT^+-1\ge 1-\TT^-$. 
  We first prove the claim for the Markovian setting, where
  Eq.~(\ref{KLRouse}) has the structure
  $\mD_{\TT}^M(t)=\sum_{k=1}^N\mD^{\pm}_{k}(t)$. %The proof turns out to be somewhat tricky. 
  We set
  $\varphi\equiv\TT^+/\TT^->1$, $\delta_{\pm}\equiv
  \TT^{\pm}-1$, %, and $\tit_k=\mu_kt$
  and
  write
  $\Delta\mD_k(t)\equiv\mD^{+}_k(t)-\mD^{-}_k(t)=\ln Z_{\varphi}(\mu_kt)$,
  such that
  \begin{eqnarray}
    Z_{\varphi}(\tau)&=&\varphi^{\ee^{-\tau}}(1+\delta_{-}\ee^{-\tau})/(1+\delta_{+}\ee^{-\tau})\nonumber\\
    &=&[
   \varphi (1+\delta_{-}\ee^{-\tau})^{\ee^{\tau}}]^{\ee^{-\tau}}/(1+\delta_{+}\ee^{-\tau})\nonumber\\
    &\ge& [\varphi(1+\delta_{-})/(1+\delta_{+})]^{\ee^{-\tau}}\ge 1,
  \end{eqnarray}
 where we have used both generalized Bernoulli inequalities, i.e.
 for any real $0\le y_{-}\le 1$, $y_{+}\ge 1$ and $x\ge -1$ we have
 $(1+x)^{y_+}\ge 1+y_+x$ and $(1+x)^{y_-}\le 1+y_-x$. Recalling the
 definition of $\Delta\mD_k(t)$ completes the proof. 
  
  To prove the claim in the non-Markovian setting for
  projections of type $q=|\delta\brr_i-\delta\brr_j|$ we
  first realize that
$\dot{\mathcal{A}}_{\TT^+}^{ij}(t)\le0$ and
  $\dot{\mathcal{A}}_{\TT^-}^{ij}(t)\ge0$, where $\dot{f}(t)\equiv \frac{d}{dt}f(t)$. Setting $\Delta\mD(t)\equiv \mathcal{D}^{nM}_{\TT^+}(t)-\mathcal{D}^{nM}_{\TT^-}(t)$
and using Eq.(\ref{KLRouseN}) we find, upon taking the derivative
 \begin{eqnarray}
   \!\!\!\!\!\!&&\Delta\dot{\mD}(t)=
   \frac{\dot{\mathcal{A}}_{\TT^+}^{ij}(t)}{\mathcal{A}_{\TT^+}^{ij}(0)}-\frac{\dot{\mathcal{A}}_{\TT^+}^{ij}(t)}{\mathcal{A}_{\TT^+}^{ij}(t)}-\frac{\dot{\mathcal{A}}_{\TT^-}^{ij}(t)}{\mathcal{A}_{\TT^-}^{ij}(0)}+\frac{\dot{\mathcal{A}}_{\TT^-}^{ij}(t)}{\mathcal{A}_{\TT^-}^{ij}(t)}.
   \label{proofN}
 \end{eqnarray}
% We now use that $\Lambda_k^{\TT}(t)\ge 0$,
 % $\Lambda_k^{\TT^+}(t)\ge\Lambda_k^{\TT^-}(t)$ %and %analogously
 % $\mathcal{A}_{\TT^+}^{ij}(t)\ge\mathcal{A}_{\TT^-}^{ij}(t)$  for any
 %$t>0$.
% Furthermore, $\dot{\Lambda}_k^{\TT^+}(t)\le 0$ , whereas
%  $\dot{\Lambda}_k^{\TT^-}(t)\ge 0$.
 Eq.~(\ref{proofN}) implies 
 %where from it follows that
 $\Delta\dot{\mD}(t)\ge 0$ because
$\mathcal{A}_{\TT^+}^{ij}(t)\ge\mathcal{A}_{\TT^+}^{ij}(0)$ while
$\mathcal{A}_{\TT^-}^{ij}(t)\le\mathcal{A}_{\TT^+}^{ij}(0)$, which completes the proof.
\end{proof}
The fact that tilted single-file diffusion, being anharmonic and asymmetric with
non-perturbative interactions, displays the asymmetry for quenches of
arbitrary magnitude and for any steepness of the potential %-- which we perceive as counter-intuitive --
hints that the asymmetry might be more
general. Note that tagging different particles in different
slopes $g> 0$ we can construct $\mU(z)$ with arbitrary asymmetry.
Alongside the physical principle underlying the asymmetry
  established for the OUp
and Theorem 1 this strongly suggest that uphill relaxation in smooth single-well
potentials could be universally faster (see SM). Since the projection
(\ref{nMark}) is independent of $\TT$
these statements should extend also to non-Markovian
observables, in particular those probed in many single-molecule and
particle-tracking experiments.

As a corollary uphill relaxation is faster also in
  multi-well potentials for equidistant quenches that predominantly disturb only the intra-well
  equilibria, in particular for
  nearly degenerate basins separated by sufficiently high barriers \cite{Moro}
  (for reasoning and examples see SM). This is violated in asymmetric
  multi-wells and examples with faster downhill relaxation are constructed in the SM.

 \emph{Conclusion.---} We uncovered an unforeseen asymmetry in the
 relaxation to equilibrium in equidistant temperature
 quenches. Uphill relaxation was found to be faster -- a
 phenomenon we proved to be universal for quenches of dynamics near
 stable minima. We hypothesize that it is a general phenomenon in
 reversible overdamped diffusion in single-well potentials
 extending to degenerate multi-well potentials for quenches
   leaving inter-well equilibria virtually intact. The dependence on
 the direction of
 the quench, 
 % displacement from equilibrium,
 which so far seems to have been
 overlooked, implies a systematic asymmetry in the dissipation of
 the system's entropy $\dot{S}_{\TT}(t)$ versus
 heat $\langle \dot{U}(t)\rangle_{\TT}$
 \cite{seifert_stochastic_2012} %during relaxation of Markovian
   %observables
 and, for specific projections,
 the modified entropy
 $\dot{\mS}_{\TT}(t)$ versus 'strong
 coupling heat' $\langle \dot{\mU}(t)\rangle_{\TT}$
 \cite{Seifert_strong,Jarzynski_strong}, which seems to be relevant for the
   efficiency of stochastic heat engines
   \cite{Edgar_NP,Schmiedl,Ouerdane}. Implying that the hot isothermal step
   can be shorter than the cold one, which
%   Implying the possibility to
   reduces
   cycle times,
   %by a shortening of the hot isothermal step
   the asymmetry may also
   be relevant for the optimization of the engine's output power \cite{Edgar_NP,Schmiedl,Ouerdane}.
 Our results can readily be tested by single-molecule and
 particle-tracking experiments  \cite{Wang_2006,Edgar_NP,Edgar_PRE,delorenzo_2015,Gladrow_2019,pushing}. 
 To understand the asymmetry on the level of individual trajectories it would be
 interesting to analyze relaxation from equidistant
 quenches in terms of occupation measures
 \cite{Lapolla_2018,lapolla_manifestations_2019} and from the perspective of stochastic thermodynamics
 \cite{seifert_stochastic_2012,Seifert_strong}. 
 
  \begin{acknowledgments}
  We thank David Hartich for fruitful discussions. The financial support from the German Research Foundation (DFG) through the Emmy Noether Program GO 2762/1-1 to AG is gratefully acknowledged.
  \end{acknowledgments}

  %\nocite{*} % Show all Bib-entries
 % \bibliographystyle{apsrev4-1.bst}
 % \bibliography{bib_out_equilibrium2}

\clearpage

\onecolumngrid
\renewcommand{\theequation}{E\arabic{equation}}
\begin{center}
\textbf{Erratum:\\Faster Uphill Relaxation in Thermodynamically Equidistant Temperature Quenches}
\end{center}

While Theorem~1 in  \cite{PRL} holds true as stated, the proof of the
theorem contains a technical error, 
i.e.\ one of the inequalities in Eq.~(9) was unfortunately
applied in the false direction and there is also an error in Eq.~(10), which 
renders the proof invalid. Most importantly, the statement of Theorem 1 as well as all the
results and conclusions of the Letter \cite{PRL} remain entirely unaffected.  

Here we provide a valid proof of Theorem 1, which is slightly longer but
straightforward. The present proof covers both cases, Eqs.~(9) and
(10). We introduce
$\Gamma(t)\equiv\sum_k[\mathcal{C}_k\mathrm{e}^{-2\mu_kt}/\mu_k]/\sum_k[\mathcal{C}_k/\mu_k]$
such that $0< \Gamma(t)\le 1$,
with the notation $\delta_{\pm}\equiv\tilde T_{\pm}-1$ as in
\cite{PRL} we may
write for the Markovian, $\Delta\mathcal D_k(t)$, and
non-Markovian, $\Delta\mathcal D^{\rm nM}(t)$,  setting  %that
\begin{equation}
\Delta\mathcal D_k(t)=\frac32\left[(\delta_+-\delta_-)e^{-2\mu_kt}-\ln\left(\frac{1+\delta_+e^{-2\mu_kt}}{1+\delta_-e^{-2\mu_kt}}\right)\right],\qquad
\Delta\mathcal D^{\rm nM}(t)=\frac32\left[(\delta_+-\delta_-)\Gamma(t)-\ln\left(\frac{1+\delta_+\Gamma(t)}{1+\delta_-\Gamma(t)}\right)\right].
\label{first}
\end{equation}
It is to prove that $\Delta\mathcal D_k(t)\ge0$ for all $t$. By
defining $y\equiv e^{2\mu_kt}$ or 
$y\equiv\Gamma^{-1}(t)$ this is equivalent to showing 
\begin{align}
  y\ln\left(\frac{y+\delta_+}{y+\delta_-}\right)\le\delta_+-\delta_-\ .
  \label{have to show this} 
\end{align}
for all $y\ge 1$. Since $\delta_+\ge 0\ge\delta_->-1$ we have that
$x\equiv(y+\delta_+)/(y+\delta_-)\ge 1$ and we can further apply that for $x\ge
1$ we have $\ln(x)\le(x-1)/\sqrt{x}$ \cite{log} to obtain
\begin{align}
y\ln\left(\frac{y+\delta_+}{y+\delta_-}\right)\le y\frac{\frac{y+\delta_+}{y+\delta_-}-1}{\sqrt{\frac{y+\delta_+}{y+\delta_-}}}=\frac{\delta_+-\delta_-}{\sqrt{(1+\delta_+/y)(1+\delta_-/y)}},
\end{align}
which allows to conclude Eq.~\eqref{have to show this} (and thus completes
the proof) for any $y$ with $(1+\delta_+/y)(1+\delta_-/y)\ge 1$.  

For the case $(1+\delta_+/y)(1+\delta_-/y)\le 1$ we need a different
approach. Here, we write 
\begin{align}
x'&\equiv 1-\frac{(y+\delta_+)(1+\delta_-)}{(y+\delta_-)(1+\delta_+)}
=\frac{(y+\delta_-)(1+\delta_+)-(y+\delta_+)(1+\delta_-)}{(y+\delta_-)(1+\delta_+)}
%\\&=\frac{y(1+\delta_+-1-\delta_-)+\delta_-+\delta_-\delta_+-\delta_+-\delta_+\delta_-}{(y+\delta_-)(1+\delta_+)}
=\frac{(\delta_+-\delta_-)(y-1)}{(y+\delta_-)(1+\delta_+)}\ge0,\label{x prime} 
\end{align}
($x'\ge0$ since $y+\delta_->y-1\ge 0$ and $\delta_+-\delta_-\ge 0$),
%since $\delta_+\ge0\ge\delta_-$),
such that we can re-write the left
hand side of Eq.~\eqref{have to show this} as
\begin{align}
y\ln\left(\frac{y+\delta_+}{y+\delta_-}\right)&=y\ln\left(\frac{(y+\delta_+)(1+\delta_-)}{(y+\delta_-)(1+\delta_+)}\times\frac{1+\delta_+}{1+\delta_-}\right)
%\nonumber\\&
=y\ln(1-x')+y\ln\left(\frac{1+\delta_+}{1+\delta_-}\right)
%\nonumber\\&
=y\ln(1-x')+y(\delta_+-\delta_-),\label{log rewritten} 
\end{align}
where we used the equidistant quenches condition
$\ln\left(\frac{1+\delta_+}{1+\delta_-}\right)=\delta_+-\delta_-$. 
Recall that we are now considering $(1+\delta_+/y)(1+\delta_-/y)\le 1$ and thus $\delta_+\delta_-+y(\delta_++\delta_-)\le 0$,
%\begin{align*}
%(1+\delta_+/y)(1+\delta_-/y)\le 1\\
%(y+\delta_+)(y+\delta_-)\le y^2\\
%\delta_+\delta_-+y(\delta_++\delta_-)\le 0,
%\end{align*}
and it follows (note that $\tilde{T}_-=-W_{0}(-\tilde{T}_+\mathrm{e}^{-\tilde{T}_+})$ with 
$W_0(z)$ %for $z\in(-\mathrm{e}^{-1},0)$
denoting the principal
branch of the Lambert-W function \cite{PRL}, and $\delta_++\delta_-=\tilde T_++\tilde T_--2\ge 0$ by Theorem 3.2 in \cite{Lambert})
\begin{align}
2(y\!+\!\delta_-)(1\!+\!\delta_+)\!-\!(\delta_+\!-\!\delta_-)(y-1) 
%& =2(y+\delta_+\delta_-)+(y+1)(\delta_++\delta_-)\nonumber\\
 & =2y-(y-1)(\delta_+\!+\!\delta_-)\!+\!2[\delta_+\delta_-+y(\delta_+\!+\!\delta_-)]
%\nonumber\\&
\le 2y-(y-1)(\delta_++\delta_-)
%\nonumber\\&
\le 2y.\label{some ineq} 
\end{align}
Applying $\ln(1-x')\le\frac{-2x'}{2-x'}$ for $0\le x'<1$ \cite{log}
($x'<1$ follows from Eq.~\eqref{x prime}) we further obtain, using
Eq.~\eqref{x prime} and \eqref{some ineq} %(note that the denominator is negative)
\begin{align}
y\ln(1-x')\le y\frac{-2x'}{2-x'}=-2y\frac{(\delta_+-\delta_-)(y-1)}{2(y+\delta_-)(1+\delta_+)-(\delta_+-\delta_-)(y-1)}
\overset{{\rm Eq.}~\eqref{some ineq}}{\le}-(\delta_+-\delta_-)(y-1),
\end{align}
noting that the expressions  negative.
Plugging this into Eq.~\eqref{log rewritten} yields Eq.~\eqref{have to show
  this} and thus completes the proof.\vspace{0.2cm}\\

\textbf{Acknowledgments.} We thank Cai Dieball for discovering the
error, and for his major contribution to finding the shortest and most elegant proof
of the theorem.

\clearpage
\newcommand{\eql}[1]{Eq.~(\ref{#1})}
\newcommand{\mxi}{\boldsymbol{\xi}}
\renewcommand{\thefigure}{S\arabic{figure}}
\renewcommand{\theequation}{S\arabic{equation}}
\setcounter{equation}{0}
\begin{center}
  \textbf{Supplementary Material for:\\
Faster uphill relaxation in thermodynamically equidistant temperature quenches}\\[0.2cm]
Alessio Lapolla and Alja\v{z} Godec\\
\emph{Mathematical bioPhysics Group, Max Planck Institute for Biophysical Chemistry, 37077 Göttingen, Germany}
\\[0.6cm]  
\end{center}

\begin{center}
\textbf{Abstract}\\[0.3cm]  
\end{center}  
\begin{quotation}
 In this Supplementary Material (SM) we present detailed derivations of
the main results for the Gaussian-Chain and tilted single-file
diffusion model presented in the main Letter, as well as several
supplementary examples with figures. We also present counterexamples
demonstrating that the uphill-downhill asymmetry is not universal as it vanishes in
sufficiently asymmetric multi-well potentials. However, we establish generic
conditions under which the asymmetry is obeyed. Finally, we also discuss the
non-Markovian Mpemba effect. 
%\tableofcontents 
\end{quotation}

 \section{Gaussian Chain and Ornstein-Uhlenbeck
   process}
 We consider a
 Gaussian Chain with N+1 beads with coordinates $\brr=\{\br_i\}$
 connected by harmonic springs with potential energy $U(\brr)=\frac{1}{2}\sum_{i=1}^N |\mathbf{r}_i-\mathbf{r}_{i+1}|^2$.
 The overdamped Langevin equation governing the dynamics of a
  Gaussian Chain with N+1 beads connected by ideal springs with zero
  rest-length and diffusion coefficient $D$ is given by the set of
  coupled It\^o equations
  \begin{eqnarray}
   d\mathbf{r}_1(t)&=&[-\mathbf{r}_1(t)+\mathbf{r}_2(t)]dt +\sqrt{2D}\mxi_1(t)\nonumber\\
   d\mathbf{r}_i(t)&=&[\mathbf{r}_{i-1}(t)-2\mathbf{r}_i(t)+\mathbf{r}_{i+1}(t)]dt+\sqrt{2D}\mxi_i(t)\nonumber\\
   d\mathbf{r}_{N+1}(t)&=&[-\mathbf{r}_{N+1}(t)+\mathbf{r}_N(t)]dt+\sqrt{2D}\mxi_{N+1}(t),
  \end{eqnarray}
where $\mxi_i(t)$ stands for zero mean Gaussian white noise, i.e. 
  \begin{eqnarray}
   \langle \mxi_i(t)\rangle=0, \qquad \langle \xi_{i,k}(t)\xi_{i,l}(t')\rangle=\delta_{kl}\delta(t-t').
  \end{eqnarray}
 It is straightforward to generalize these formulas to any
 reversible $M$-dimensional Ornstein-Uhlenbeck process $\brr(t)\equiv\{\br_i(t)\}$ with some $\mathbb{R}^{M}\times
 \mathbb{R}^{M}$ symmetric force matrix $\bxi$ and
 potential energy function $U(\brr)=\frac{1}{2}\brr^T\bxi\brr$
  \begin{equation}
    d\brr (t)=\bxi \brr(t) dt + \sqrt{2}d\mathbf{W}_t,
 \label{OUP}   
  \end{equation}
  where $d\mathbf{W}_t$ is the $M$-dimensional super-vector of
independent Wiener increments with zero mean and unit variance,
$\mathbb{E}[dW_{i,t}dW_{j,t'}]=\delta_{i,j}\delta(t-t')$. In this
super-vector/super-matrix notation the Gaussian chain is recovered by introducing $\mathbb{R}^{3(N+1)}\times \mathbb{R}^{3(N+1)}$ tridiagonal
super-matrix $\bxi$ with elements
\begin{equation}
\bxi_{ii}=\mathbbm{1},\quad
\bxi_{ii+1}=\bxi_{ii-1}=(-1+(-1)^{\delta_{i,1}+\delta_{i,N+1}})\mathbbm{1},
\label{matrix}
  \end{equation}
where $\mathbbm{1}$ is the $3\times 3$
identity matrix. This leads to the equations of motion presented in
the Letter. Since $\bxi$ is supposed to be symmetric these equations can be decoupled by diagonalizing $\bxi$ i.e. by
passing to normal coordinates $\brr\to\bxx\equiv\{\bx_i\}$:
\begin{equation}
\BA^T\bxi\BA=\mathrm{diag}(\boldsymbol{\mu})
\end{equation}
where the diagonal matrix has elements $\mathrm{diag}(\boldsymbol{\mu})_{kk}=\mu_k$.
This yields eigenvalues $\mu_i$ and orthogonal super-matrices $(\BA)_{ij}$, where the $i$th column
$A^\mathbf{k}_{ji},j=1,N+1$ with $\mathbf{k}={1,2,3}$ corresponds to a ``vector'' of eigenvectors
$x_i$, i.e. $\bx_i=\{x_{i+1},x_{i+2},x_{i+3}\}$ with eigenvalues
$\mu_{i+k}$ with $k=1,2,3$.
%$k-$th component of an eigenvector with $k=1,2,3$.
In the specific case of the
Gaussian chain $(\BA)_{ij}$ refer to
super-matrices
$(\BA)_{ij}\equiv A_{ij}\mathbbm{1}$, where the $i$th column
$A_{ji},j=1,M$ corresponds to an eigenvector of the
1-dimensional contraction of $\bxi$ (see e.g. Eq.~(\ref{matrix}) for
the Gaussian chain, i.e. $\bxi_{ii}\to 1$ and
$\bxi_{ii-1}\to(-1+(-1)^{\delta_{i,1}+\delta_{i,N+1}})$). 

In the particular case of the Gaussian chain with $M=3(N+1)$ the eigenvalues and
eigenvectors read
\begin{equation}
\mu_k=4\sin^2\left(\frac{k\pi}{2(N+1)}\right),\qquad
A_{ij}=\sqrt{\frac{2^{1-\delta_{j,0}}}{N+1}}\cos\left[\frac{(2i-1)j\pi}{2(N+1)}\right].
\end{equation}
 The back-transformation in general corresponds to
$(\br_i)_k=\sum_{j=1}^MA^k_{ij}(\bx_{j})_k$. In normal coordinates the the
 potential energy reads $U(\bxx)=\frac{1}{2}\sum_{k} \mu_k
 x_k^2$ while the
corresponding  Fokker-Planck equation for the evolution of the Green's
function of internal degrees of freedom (i.e. excluding center of mass
motion) at a temperature $T$, $G_T(\bx,t|\bx_0)$, reads
\begin{equation}
\left[\partial_t-D\sump
  \left(\partial^2_{\bx_k}+\beta\mu_k\partial_{\bx_k}\mathbf{x}_k\right)\right]G_T(\bx,t|\bx_0)=\delta(\bx-\bx_0),
\label{FPE}
\end{equation}
where $\beta=1/k_{\mathrm{B}}T$ and the primed sum runs over all
non-zero eigenvalues $\mu_k$, i.e. $\sump=\sum_{k; \mu_k\ne
  0}$.
Note that we are interested only in internal dynamics and not on the
center-of-mass dynamics, therefore we ignore in Eq.~(\ref{FPE}) and
what follows all contributions with $\mu_k=0$, as these pertain to
(ideal) rigid-body motions motion (i.e. center of mass translation
and rotation). Alternatively, we consider expansions around stable
minima, such that $\bxi$ is positive definite. Without any loss of generality we henceforth set $D=1$ and
measure energies in units of $k_{\mathrm{B}}T_\eeq$, where $T_\eeq$ is
the equilibrium (post-quench) temperature as defined in the
manuscript. Moreover, since we are only interested in the evolution at
temperature $T_\eeq$, we further express temperature relative to
$T_\eeq$, i.e. $\TT \equiv T/T_\eeq$, such that $\TT=1$ corresponds to
$T_\eeq$. 
The stationary solution of Eq.~(\ref{FPE}) corresponds to
the Boltzmann-Gibbs density
 \begin{equation}
 P^{\mathrm{eq}}_{\TT}(\mathbf{X})=\prodp\left(\frac{\mu_k}{2\pi}\right)^{3/2}\exp\left(-\frac{\mu_k\mathbf{x}_k^2}{2\TT}\right),
 \label{Gauss_eq}
 \end{equation}
 where $\prodp=\prod_{k; \mu_k\ne 0}$.
The probability density of $\bxx$ starting from an initial
probability density function $P^{\mathrm{eq}}_{\TT}(\mathbf{X})$ is
obtained from the Green's function via
\begin{equation}
  P_{\TT}(\bxx,t)=\!\int\!\! d\bxx_0
  G_1(\bxx,t|\bxx_0)P_{\TT}^{\eeq}(\bxx_0),
\label{GMarkov}  
\end{equation}
where 
\begin{equation}
   G_1(\mathbf{X},t|\mathbf{X}_0,0) = \prodp \left(\frac{\mu_k}{2\pi(1-\e{-2\mu_kt})}\right)^{3/2}\exp\left[-\frac{\mu_k}{2(1-\e{-2\mu_k t})}\left(\mathbf{x}_k^2-2\mathbf{x}_k\cdot\mathbf{x}_{0k}\e{-\mu_k t}+\mathbf{x}^{2}_{0k}\e{-2\mu_k t}\right)\right],
\end{equation}
is the well-known Green's function of an Ornstein-Uhlenbeck
process. Note that $\lim_{t\to\infty}G_{\TT}(\mathbf{X},t|\mathbf{X}_0,0)=P_{\TT}^{\eeq}(\bxx)$. The intergal \eql{GMarkov} can easily be performed
analytically and yields
\begin{equation}
  P_{\tilde{T}}(\mathbf{X},t)=\prodp\left(\frac{\mu_k}{2\pi[1+(\TT-1)\e{-2\mu_kt}]}\right)\exp\left(-\frac{\mu_k\mathbf{x}_k^2}{2[1+(\TT-1)\e{-2\mu_kt}]}\right).
  \label{GPDF}
  \end{equation}
\eql{GPDF} can now be used to calculate the Kullback-Leibler
divergence (Eq.~(3) in the Letter) to yield the first of Eqs.~(8) in
the Letter. Furthermore, the average potential energy and the system's
entropy are defined as
      \begin{equation}
       \langle U(t) \rangle_{\TT} \equiv\int d\mathbf{x}
       P_{\tilde{T}}(\mathbf{x},t) U(\mathbf{x}),\quad
       S_{\TT}(t)=-\int d\bx P_{\TT}(\bx,t)\ln P_{\TT}(\bx,t)
      \end{equation} 
      and read, upon performing the integration and introducing $\Lambda_k^{\TT}(t)\equiv
1+(\TT-1)\e{-2\mu_kt}$,
      \begin{equation}
       \langle U(t) \rangle_{\TT}=\frac{3}{2}\sump
       \Lambda_k^{\TT}(t), \quad S_{\TT}(t)=\frac{3}{2}\sump\left[1-\ln\left(\frac{\mu_k}{2\pi\Lambda_k^{\TT}(t)}\right)\right].
      \end{equation}       
      
In the projected, non-Markovian setting we are interested in the
dynamics of an internal distance $d_{ij}(t)\equiv
|\br_i(t)-\br_j(t)|$. In normal coordinates this corresponds to
\begin{equation}
d_{ij}\equiv |\br_i-\br_j|=\sump \left|\left(A_{ik}-A_{jk}\right) \bx_k\right|.
\label{distance}  
\end{equation}  
By doing so we project out $3(N-1)$ latent degrees of freedom and
track only $d_{ij}$. The 'non-Markovian Green's function', that is,
the probability density of $d_{ij}$ and time $t$ given that the full
system evolves from $P_{\TT}^{\eeq}(\bxx_0)$ is defined as
 \begin{eqnarray}
    \mathcal{P}_{\tilde{T}}(d,t)&=&\int d\Omega \int d\mathbf{X}_0\delta(\sump [A_{ik}- A_{jk}]\bx_k-\mathbf{d})
    G_1(\mathbf{X},t|\mathbf{X}_0,0)  P_{\TT}^{\eeq}(\mathbf{X}_0)
    \notag\\
    &=&d^2\int_0^{\infty}  dl_0 l_0^2\int d\Omega\int d\Omega_0\delta(\sump [A_{ik}-
      A_{jk}]\bx_k-\mathbf{d})\delta(\sump [A_{ik}-
      A_{jk}]\bx_{k,0}-\mathbf{l}_0)G_1(\mathbf{X},t|\mathbf{X}_0,0)
    P_{\TT}^{\eeq}(\mathbf{X}_0) \notag\\
    &\equiv&\int_0^{\infty} dl_0 \mathcal{P}_{\tilde{T}}(d,t,l_0;P_{\TT}^{\eeq}),
    \label{GausDis}
   \end{eqnarray}
 where we first project onto the vectors $\mathbf{d}$ and
 $\mathbf{d}_0$ and afterwards marginalize over all respective angles
 $\Omega$ and $\Omega_0$. Note that the stept in line 2 of
 \eql{GausDis} is actually not necessary but is preferable if one also wants to
 access the general non-Markovian  two-point joint density
 $\mathcal{P}_{\tilde{T}}(d,t,d_0;P_{\TT}^{\eeq})$. The calculation
 proceeds as follows.

 We first preform two 3-dimensional Fourier transforms $\mathbf{d}_0
 \to \mathbf{u}$ and $\mathbf{d}\to\mathbf{v}$:
 \begin{eqnarray}
   \label{FT}
   \hat{\mathcal{P}}_{\tilde{T}}(\mathbf{u},t,\mathbf{v}_0;P_{\TT}^{\eeq})&\equiv&\displaystyle{\frac{1}{(2\pi)^6}\int d\mathbf{d}\ee^{-i\mathbf{v}\cdot\mathbf{d}}\int d\mathbf{d}_0\ee^{-i\mathbf{u}\cdot\mathbf{d}_0}\mathcal{P}_{\tilde{T}}(\mathbf{d},t,\mathbf{d}_0;P_{\TT}^{\eeq})}\nonumber\\
   &=& \displaystyle{\frac{1}{(2\pi)^6}\prodp
       \exp\left[ -\frac{\mathcal{C}_k^{ij}}{2\mu_k}(1+(\TT-1)\e{-2\mu_k
           t})\mathbf{v}^2-\frac{\mathcal{C}^{ij}_k}{2\mu_k}\mathbf{u}^2-
        2\frac{\mathcal{C}^{ij}_k}{2\mu_k}\e{-\mu_kt}\mathbf{v}\cdot \mathbf{u}\right]\!,}
    \end{eqnarray}
 where we have introduced the short-hand notation
\begin{equation}
 \mathcal{C}_k^{ij}\equiv (A_{ik}-
 A_{jk})^2.
 \end{equation}
     Now we define, as in the main text, $\Lambda_k^{\TT}(t)\equiv
     1+(\TT-1)\e{-2\mu_kt}$ as well as 
\begin{equation}
\mathcal{A}_{\TT}^{ij}(t)\equiv\sump\Lambda_k^{\TT}(t)\mathcal{C}_k^{ij}/2\mu_k,\quad
\mathcal{B}_{\TT}^{ij}(t)\equiv\TT\sump \mathcal{C}_k^{ij}\e{-\mu_kt}/2\mu_k,
\end{equation}
and rewrite \eql{FT} as
\begin{equation}
 \hat{\mathcal{P}}_{\tilde{T}}(\mathbf{u},t,\mathbf{v}_0;P_{\TT}^{\eeq})=\frac{1}{(2\pi)^6}\exp\left(-\mathcal{A}_{\TT}^{ij}(t)\mathbf{v}^2-\mathcal{A}_{\TT}^{ij}(0)\mathbf{u}^2-2\mathcal{B}_{\TT}^{ij}(t)\mathbf{v}\cdot \mathbf{u}\right),
\label{FT2}  
\end{equation}
which can be easily inverted back to give
\begin{equation}
\mathcal{P}_{\tilde{T}}(\mathbf{d},t,\mathbf{d}_0;P_{\TT}^{\eeq})=(4\pi)^{-3}[\mathcal{A}_{\TT}^{ij}(t)\mathcal{A}_{\TT}^{ij}(0)-\mathcal{B}_{\TT}^{ij}(t)^2]^{-3/2}\exp\left(-\frac{1}{4}\frac{\mathcal{A}_{\TT}^{ij}(0)\mathbf{d}^2-2\mathcal{B}_{\TT}^{ij}(t)\mathbf{d}\cdot\mathbf{d}_0+\mathcal{A}_{\TT}^{ij}(t)\mathbf{d}_0^2}{\mathcal{A}_{\TT}^{ij}(t)\mathcal{A}_{\TT}^{ij}(0)-\mathcal{B}_{\TT}^{ij}(t)^2}\right).
\label{Gvec}  
\end{equation}
The marginalization is henceforth straightforward and yields
\begin{eqnarray}
  \mathcal{P}_{\tilde{T}}(d,t,d_0;P_{\TT}^{\eeq})&=&\frac{(dd_0)^2\exp\left(-\frac{1}{4}\frac{\mathcal{A}_{\TT}^{ij}(0)\mathbf{d}^2+\mathcal{A}_{\TT}^{ij}(t)\mathbf{d}_0^2}{\mathcal{A}_{\TT}^{ij}(t)\mathcal{A}_{\TT}^{ij}(0)-\mathcal{B}_{\TT}^{ij}(t)^2}\right)}{2\pi[\mathcal{A}_{\TT}^{ij}(t)\mathcal{A}_{\TT}^{ij}(0)-\mathcal{B}_{\TT}^{ij}(t)^2]^{3/2}}\int_0^{\pi}d\cos\theta\exp\left(\frac{1}{2}\frac{dd_0\mathcal{B}_{\TT}^{ij}(t)\cos\theta}{\mathcal{A}_{\TT}^{ij}(t)\mathcal{A}_{\TT}^{ij}(0)-\mathcal{B}_{\TT}^{ij}(t)^2}\right)\nonumber\\
  &=&\frac{dd_0}{2\pi\mathcal{B}_{\TT}^{ij}(t)}\frac{\exp\left(-\frac{1}{4}\frac{\mathcal{A}_{\TT}^{ij}(0)\mathbf{d}^2+\mathcal{A}_{\TT}^{ij}(t)\mathbf{d}_0^2}{\mathcal{A}_{\TT}^{ij}(t)\mathcal{A}_{\TT}^{ij}(0)-\mathcal{B}_{\TT}^{ij}(t)^2}\right)}{[\mathcal{A}_{\TT}^{ij}(t)\mathcal{A}_{\TT}^{ij}(0)-\mathcal{B}_{\TT}^{ij}(t)^2]^{1/2}}\sinh\left(\frac{1}{2}\frac{\mathcal{B}_{\TT}^{ij}(t)dd_0}{\mathcal{A}_{\TT}^{ij}(t)\mathcal{A}_{\TT}^{ij}(0)-\mathcal{B}_{\TT}^{ij}(t)^2}\right).
\label{Gvec}  
\end{eqnarray}
The probability density of $d$ at time $t$ after having started from
an initial density $ P^{\mathrm{eq}}_{\TT}(\mathbf{X}_0)$ (i.e. the
pre-quench equilibrium) follows by simple integration and finally reads
\begin{equation}
\mathcal{P}_{\tilde{T}}(d,t)=\int_0^{\infty} dl_0 \mathcal{P}_{\tilde{T}}(d,t,l_0;P_{\TT}^{\eeq})\equiv\frac{d^2}{2\sqrt{\pi}}\mathcal{A}_{\TT}^{ij}(t)^{-3/2}\mathrm{e}^{-d^2/4\mathcal{A}_{\TT}^{ij}(t)},
\end{equation}  
which is precisely Eq.~(7) in the manuscript. The average potential of mean
force, $\langle \mU(t)\rangle_{\TT}\equiv-\langle \ln
\mathcal{P}^{\eeq}_1(d)\rangle_{\TT}$ and entropy, $\mS_{\TT}(t)\equiv
-\langle \ln \mpp_{\TT}(d,t) \rangle_{\TT}$ (in
units of $k_{\mathrm{B}}T$), where $\langle f(d)\rangle_{\TT}\equiv
\int dl \mpp_{\TT}(l,t)f(l) $, in turn read
\begin{eqnarray}
\langle
\mU(t)\rangle_{\TT}&=&\ln\left(2\sqrt{\pi}\mathcal{A}_{\TT}^{ij}(0)^{3/2}\right)-\mathcal{A}_{\TT}^{ij}(t)^{1/2}(2-\gamma_\mathrm{e}+\ln\mathcal{A}_{\TT}^{ij}(t)
  )+\frac{3}{2}\frac{\mathcal{A}_{\TT}^{ij}(t)}{\mathcal{A}_{\TT}^{ij}(0)}\nonumber\\
%\mS_{\TT}(t)&=&\left[\log\left(\frac{1}{2\sqrt{\pi}\gamma_{\tilde{T}}(t)^{3/2}}\right)+\sqrt{\gamma_{\tilde{T}}(t)}(2-\gamma_\mathrm{e}+\log\gamma_{\tilde{T}}(t))-\frac{3}{2}\right]\\
\mS_{\TT}(t)&=&\ln\left(2\sqrt{\pi}\mathcal{A}_{\TT}^{ij}(t)^{3/2}\right)-\mathcal{A}_{\TT}^{ij}(t)^{1/2}(2-\gamma_\mathrm{e}+\ln\mathcal{A}_{\TT}^{ij}(t)
  )+\frac{3}{2}
\label{PMF}  
\end{eqnarray}  
where $\gamma_\mathrm{e}$ denotes Euler's gamma.
     Using the results in \eql{PMF} as well as the definition of the
     equilibrium free energy, $F=-\ln Q_{1}\equiv -\ln\int d\bxx
     \mathrm{e}^{-U(\bxx)}$, (where all potentials are in units of $k_\mathrm{B}T_{\eeq}$) we arrive at
     \begin{equation}
       \mathcal{D}[P_{\tilde{T}}(t)||P_1]=\langle U_{\tilde{T}}(t)\rangle-S_{\tilde{T}}(t) - F,\quad\mathcal{D}[\mathcal{P}_{\tilde{T}}(t)||\mathcal{P}_1]=\langle \mathcal{U}_{\tilde{T}}^{\mathrm{eff}}(t)\rangle-\mathcal{S}_{\tilde{T}}(t),
     \end{equation}
     which are exactly Eqs.~(4) and (5) in the Letter. For any stable
     symmetric matrix $\bxi$ the condition of equidistant quenches
     $\mathcal{D}[P_{\TT^+}(0^+)||P_1]=\mathcal{D}[P_{\TT^-}(0^+)||P_1]$
     is satisfied by
\begin{equation}
     \TT^+-\TT^-=\ln(\TT^+/\TT^-) \quad\to\quad
     \TT^+(\TT^-)=-W_{-1}(-\TT^-\ee^{-\TT^-}),
\end{equation}     
     where $W_{-1}(x)$
     defined for $x\in[-\ee^{-1},0)$ denotes the second real branch of
     the Lambert-W function, which in turn satisfies the following sharp two-sided
     bound \cite{SWbound}
 \begin{equation}    
\frac{2}{3}\left[1 + \sqrt{2(\TT^- - 1 - \ln\TT^-)} + \TT^- - 1 -\ln\TT^-\right]\le
 \TT^+(\TT^-)\le 1 + \sqrt{2(\TT^- - 1 - \ln\TT^-)} + \TT^- - 1 -\ln\TT^-.%\le-W_{-1}(-\TT^-\ee^{-\TT^-})
   \label{bound}
 \end{equation}
 
\subsection{Kullback-Leibler divergence and uphill/downhill asymmetry in relaxation of a random Gaussian network}
In the Letter we prove that for any reversible ergodic
Ornstein-Uhlenbeck process uphill relaxation (i.e. for a quench from
$\TT^-\uparrow 1$ for which $\langle U(0^+)\rangle_{\TT^-}-\langle
U\rangle_1<0$) is always faster that downhill relaxation (i.e. for a quench from
$\TT^+\downarrow 1$ for which $\langle U(0^+)\rangle_{\TT^-}-\langle
U\rangle_1>0$), where the pair of equidistant quenches $\TT^+$ and
$\TT^-$ is defined in the Letter. To visualize this on hand of an
additional instructive example, we generated a random Gaussian network
with 10 beads by filling elements of the upper-triangular part of the connectivity
matrix with a $-1$ according to a Bernulli distribution with
$p=0.7$. The resulting matrix was then symmetrized and the diagonal
elements chosen to assure sure mechanical stability
(i.e. 'connectedness'). The resulting connectivity matrix
$\boldsymbol{\Gamma}$ is related to the general Ornstein-Uhlenbeck
matrix in \eql{OUP} via $\bxi=\boldsymbol{\Gamma}\otimes\mathbbm{1}$, where
\begin{equation}\boldsymbol{\Gamma}=
  \begin{pmatrix}
  5 & -1 & -1 & 0 & -1 & 0 & 0 & 0 & -1 & -1 \\
  -1 & 5 & -1 & 0 & -1 & -1 & 0 & 0 & 0 & -1 \\
  -1 & -1 & 8 & -1 & -1 & 0 & -1 & -1 & -1 & -1 \\
  0 & 0 & -1 & 7 & -1 & -1 & -1 & -1 & -1 & -1 \\
  -1 & -1 & -1 & -1 & 9 & -1 & -1 & -1 & -1 & -1 \\
  0 & -1 & 0 & -1 & -1 & 7 & -1 & -1 & -1 & -1 \\
  0 & 0 & -1 & -1 & -1 & -1 & 7 & -1 & -1 & -1 \\
  0 & 0 & -1 & -1 & -1 & -1 & -1 & 7 & -1 & -1 \\
  -1 & 0 & -1 & -1 & -1 & -1 & -1 & -1 & 8 & -1 \\
  -1 & -1 & -1 & -1 & -1 & -1 & -1 & -1 & -1 & 9
  \end{pmatrix}.
     \end{equation}
  
The corresponding results for
$\mathcal{D}[\mathcal{P}_{\tilde{T}}(t)||\mathcal{P}_1]$, whereby we
tagged the distance between the 1st and 10th bead, i.e. $d=|\br_1-\br_{10}|$ are shown in Fig. \ref{random}.
     \begin{figure}[!ht]
     \begin{center}
      \includegraphics[width=0.4\textwidth]{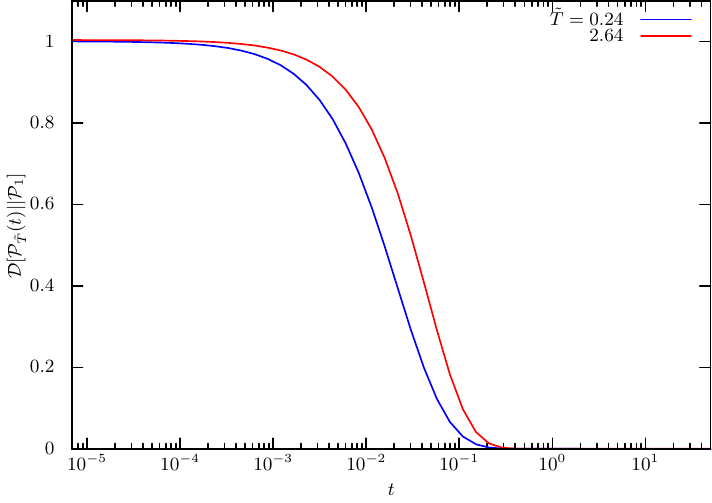}
      \caption{$\mathcal{D}[\mathcal{P}_{\TT^{\pm}}(t)||\mathcal{P}_1]$
        as a function of time for a pair of equidistant quenches
        with $\TT^+=2.64$ and $\TT^-=0.24$, which illustrates the asymmetry in the thermal
        relaxation holds for any Gaussian Network (according to our proof).}
      \label{random}
     \end{center}
     \end{figure}

     \section{Tilted Single File}
      We consider a system of $N$ hard-core point-particles (the extension to a finite
  diameter is straightforward
  \cite{Slizana_diffusion_2009,Slizana_single-file_2008}) diffusing in a
  box of unit length with a diffusion coefficient $D$, which we set
  equal to 1 and express energies in units of $k_{\mathrm{B}}T_{\eeq}$ without any loss of generality. The particles with positions $\bx=\{x_i\}$ feel the presence of a linear
  potential $U(\{x_i\})=\sum_{i=1}^N g x_i$. The Green's function of
  the system obeys the many-body Fokker-Planck equation
  \begin{equation}
    (\partial_t-\hat{\mathcal{L}}_{\TT})G_{\TT}(\bx,t|\bx_0)\equiv\left(\partial_t-\sum_{i=1}^N(\partial_{x_i}^2+g\TT^{-1}\partial_{x_i})\right)G_{\TT}(\bx,t|\bx_0)=\delta(\bx-\bx_0) 
    \label{FPEs}
    \end{equation}
  The confining walls are
  assumed to be perfectly reflecting, i.e
  $J(x_i)|_{x_i=0}=J(x_i)|_{x_i=1}=-D(g/\TT-\partial_{x_i})G_{\TT}(\bx,t|\bx_0)=0,\forall
  i$. Moreover, particles are not allowed to cross, which introduces
  the following set of internal boundary conditions
 \begin{equation}
\left(\partial_{x_{i+1}}-\partial_{x_i}\right)G_{\TT}(\bx,t|\bx_0)|_{x_{i+1}-x_i=0}=0,\forall
i.
\label{IBC}
 \end{equation}
\eql{FPEs} with reflecting external boundary conditions
$J(x_i)|_{x_i=0}=J(x_i)|_{x_i=1}=0,\forall i$ and internal boundary
conditions in \eql{IBC} is solved exactly using the coordinate Bethe
ansatz (we do not repeat the results here as they can be found in
\cite{Slapolla_manifestations_2019}). It is convenient to introduce the
particle-ordering operator
\begin{equation}
\hat{\mathcal{O}}_{\bx}\equiv\prod_{i=2}^N\theta(x_i-x_{i-1}),  
\label{order}
\end{equation}
where $\theta(x)$ is the Heaviside step-function. Let
$\zeta_{\TT}(x_i,t|x_{0i})$ be the Green's function of the corresponding
single-particle problem and $P^\eeq_{\TT}(x_i)=\lim_{t\to\infty}\zeta_{\TT}(x_i,t|x_{0i})$ the density of the
equilibrium measure at temperature $\TT$, then the Green's
function can be written directly as
\begin{equation}
G_{1}(\bx,t|\bx_0)=N!\hat{O}_{\bx}\prod_{i=1}^N\zeta_1(x_i,t|x_{0i})\to P_{\TT}(\bx,t)=N!\hat{O}_{\bx}\prod_{i=1}^N\int_0^1dx_{i0}\zeta_1(x_i,t|x_{i0})P^\eeq_{\TT}(x_{i0}),
\end{equation}
where the normalization factor $N!$ assures a correct re-weighing of
non-crossing trajectories \cite{Slapolla_manifestations_2019}. We expand the
  Green's function for a single particle at $\TT=1$ in a bi-orthonormal
  eigenbasis, $\zeta(x,t|x_0)=\sum_k \phi_k^R(x)
  \phi_k^L(x_0)\ee^{-\lambda_kt}$, where $\lambda_0=0$, $
  \lambda_k=\pi^2k^2+g^2/4$ and 
\begin{equation}
  \phi^L_k(x)=\frac{\ee^{gx/2}}{\sqrt{2\lambda_k}}
  \left(g\sin(k\pi x)-2k\pi\cos(k\pi x)\right),  k>0
 \label{eigen} 
   \end{equation}
and $\phi^R_k(x)=\e{-gx}\phi^L_k(x)$, whereas for $k=0$ we have
$\phi^L_0(x)=1$, $\phi^R_0(x)=P_1^{\mathrm{eq}}(x)$.

A key simplification in the calculation of order-preserving integrals as well as all projected, tagged-particle
observables (incl. functionals; see
e.g. \cite{Slapolla_manifestations_2019}) is the so-called 'extended
phase space integration' introduced by Lizana and Ambj\"ornsson
\cite{Slizana_diffusion_2009,Slizana_single-file_2008}, according to
which for any $1\le M\le N$ and some function $f(\bx)$ that is
symmetric with respect to permutation of coordinates $x_i$
\begin{equation}
\hat{O}_{\bx}\prod_{i=1}^N\int_0^1dx_{i0}f(\bx)\delta(z-x_M)=\prod_{i=1}^{M-1}\int_0^zdx_{i0}\prod_{j=M+1}^{N}\int_z^1dx_{j0}\frac{f(x_M=z,\{x_{i\ne
    M}\})}{(M-1)!(N-M)!}.
\label{extended}  
\end{equation}  
With the aid of \eql{extended} it is possible to calculate the
Kullback-Leibler divergence as
\begin{equation}
\mathcal{D}[P_{\tilde{T}}||P_1]=\int d\bx P_{\tilde{T}}(\mathbf{x},t)
\ln(P_{\tilde{T}}(\bx,t)/P_1(\mathbf{x}))\equiv
\left[\int_0^1dxP^1_{\TT}(x,t)\ln(P^1_{\TT}(x,t)/P^1_{1}(x))\right]^N,
\label{KLS}
\end{equation}  
where
$P^1_{\TT}(x,t)=\int_0^1dx_{0}\zeta_1(x,t|x_{0})P^\eeq_{\TT}(x_0)$,
and the second equality is a result of applying \eql{extended}. The
result in \eql{KLS} for a single file of 10 particles is depicted in
Fig.~(3a) in the Letter. For the sake of completeness, we also present
the exact explicit result for $\langle U(t)\rangle_{\TT}\equiv
gN\langle x(t)\rangle_{\TT}$, which reads
\begin{equation}
\langle
U(t)\rangle_{\TT}=gN\left(\frac{1-\mathrm{e}^{g}+g}{g(1-\mathrm{e}^g)}+8\sum_{k=1}^{\infty}\left(\frac{gk\pi}{\lambda_k}\right)^2\frac{(\TT-1)(\ee^{g/2}-(-1)^k)(\ee^{g/\TT}-(-1)^k\ee^{g/2})}{(\ee^{g/\TT}-1)[(\TT-2)^2g^2+(2\pi
    k\TT)^2]}\ee^{-\lambda_kt}\right).
\label{meanU}
\end{equation}
%\begin{equation}
%      A_k=\begin{cases}
%           \frac{1-\mathrm{e}^{g}+g}{g(1-\mathrm{e}^g)}\quad k=0 \\
%           \frac{\sqrt{2}k\pi(\mathrm{e}^{g/2}-(-1)^k)}{\lambda_k^{3/2}} \; \text{otherwise}
%          \end{cases}
%     \end{equation}
%     
%     \begin{equation}
%      B_k=\begin{cases}
%           1\quad k=0 \\
%           \frac{8(\tilde{T}-1)gk\pi*(\mathrm{e}^{g/\tilde{T}}-(-1)^k\mathrm{e}^{g/2})}{\sqrt{1/2+2k^2\pi^2/g^2}(\mathrm{e}^{g/\tilde{T}}-1)[(\tilde{T}-2)^2g^2+(2\tilde{T}k\pi)^2]} \; \text{otherwise}
%          \end{cases}
%     \end{equation}
%          \begin{equation}
%      \langle x(t)\rangle=\sum_k A_k B_k \ee^{-\lambda_kt}
%     \end{equation}
     
The results for the non-Markovian tagged-particle dynamics can be
derived analogously. The probability density function for tagging the
$M$th particle  is defined as
\begin{equation}
\mpp_{\TT}(z,t)=\hat{\Pi}_{\mathbf{x}}(z)P_{\tilde{T}}(\bx,t)\equiv\hat{O}_{\bx}\prod_{i=1}^N\int_0^1dx_{i0}\delta(z-x_{\mathcal{T}})P_{\tilde{T}}(\bx,t)
  \label{NMS}
\end{equation}
and since $P_{\tilde{T}}(\bx,t)$ is symmetric to permutation of
particle indices \eql{extended} can be applied. The exact result has
the form of a spectral expansion and reads
\begin{equation}
\mathcal{P}_{\TT}(z,t)=\sum_{\mathbf{k}}
V_{0\mathbf{k}}(z)\mathcal{V}_{\mathbf{k}0}^{\TT}
\mathrm{e}^{-\lambda_\mathbf{k} t},
\label{bethe}
\end{equation}
where  $\mathbf{k}=\{k_i\}$ is a $N$-tuple of non-negative integers
and $\lambda_\mathbf{k}=\sum_{n=1}^N \lambda_{k_n}$ are Bethe eigenvalues of the
  operator $\hat{\mathcal{L}_1}=\sum_{i=1}^N(\partial_{x_i}^2+g\partial_{x_i})$ in a unit
  box under non-crossing conditions with $\lambda_0=0$ and $
  \lambda_{k_i}=\pi^2k_i^2+g^2/4,\forall k>0$. Let $N_L=\mathcal{T}-1$ and $N_R=N-\mathcal{T}$ be the total number of particles to the left
  and to the right of the tagged particle, respectively. Then $V_{0\mathbf{k}}(z)$ and
  $\mathcal{V}_{\mathbf{k}0}^{\TT}$ in \eql{bethe} are defined as 
%  \begin{eqnarray}
\begin{align}    
 \label{coeff1} 
 &\!\!\!V_{0k}(z)=\frac{m_{\mathbf{k}}}{N_L!N_R!}\frac{2g\alpha}{\TT\Omega^g_{\TT}(0,1)}\sum_{\{k_i\}}
T^{1}_{\mathcal{T}}(z)\prod_{i=1}^{N_L}L_i^{1}(z)\!\!\!\!\!\!\prod_{i=N_L+2}^N
\!\!\!\!\!R_i^{1}(z)\\
&\!\!\!\mathcal{V}_{\mathbf{k}0}^{\TT}=\frac{N!}{N_L!N_R!}\frac{2g\alpha}{\TT\Omega^g_{\TT}(0,1)}\!\int_0^1\!\!dz\sum_{\{k_i\}}%P_{\TT}^{\mathrm{eq}}(z)\phi^L_{k_{\mathcal{T}}}(z)
T^{\TT}_{\mathcal{T}}(z)
\prod_{i=1}^{N_L}L_i^{\TT}(z)\!\!\!\!\!\!\prod_{i=N_L+2}^N
\!\!\!\!\!R_i^{\TT}(z)
%&L_i(z)=-\sqrt{\frac{2}{\lambda_{k_i}}}\ee^{-gz/2}\sin(k_i\pi
%z), R_i(z)=-L_i(z), \forall k_i>0\\
\label{coeff2}
\end{align}
where $\alpha=\TT/(2-\TT)$,  $\Omega^g_{\TT}(x,y)\equiv
\ee^{-gx/\TT}-\ee^{-gy/\TT}$, and $m_\mathbf{k}=\prod_i n_{k_i}!$ is the  multiplicity
  of the Bethe eigenstate corresponding to the $N$-tuple $\mathbf{k}$,
  and the number $n_{k_i}$ counts how many times the eigenindex $k_i$
  appears in the Bethe eigenstate
  \cite{Slapolla_manifestations_2019}. In \eql{coeff2} we have introduced the auxiliary
functions
\begin{equation}
T^{\TT}_{\mathcal{T}}(z)=P_{\TT}^{\mathrm{eq}}(z)\frac{\ee^{gx/2}\left(g\sin(k_{\mathcal{T}}\pi
z)-2k_{\mathcal{T}}\pi\cos(k_{\mathcal{T}}\pi
z)\right)}{\sqrt{2\lambda_{k_{\mathcal{T}}}}},\forall k_{\mathcal{T}}>0
\end{equation}
and $T^{\TT}_{\mathcal{T}}(z)=P_{\TT}^{\mathrm{eq}}(z)$ for $k_{\mathcal{T}}>0$ 
%  \left(g\sin(k\pi x)-2k\pi\cos(k\pi x)\right)
where $P_{\TT}^{\mathrm{eq}}(z)$ is defined as
\begin{equation}
\mathcal{P}_{\tilde{T}}^{\mathrm{eq}}(z)=\frac{g N!}{N_L!N_R!}\frac{\Omega^g_{\TT}(0,z)^{N_L}\Omega^g_{\TT}(z,1)^{N_R}}{\TT\Omega^g_{\TT}(0,1)}\ee^{-gz/\TT},
\label{ragged}  
\end{equation}   
as well as
\begin{eqnarray}
    L_i^{\TT}(z)&=&\begin{cases}
    \Omega_{\TT}^g(0,z)/\Omega_{\TT}^g(0,1), \quad k_i=0\\
%    \displaystyle{
%      \frac{\Omega_{\TT}^g(0,z)}{\Omega_{\TT}^g(z,1)}}
\lambda_{\sqrt{\alpha}k_i}\Phi_{k_i}^{g,\alpha}(0,z)+k_i\pi g\Psi_{k_i}^{g,\alpha}(0,z)(\TT-1)/(2-\TT),  \quad k_i>0\notag\\    
    \end{cases}\\
    R_i^{\TT}(z)&=&\begin{cases}
%    \displaystyle{
%      \frac{\Omega_{\TT}^g(z,1)}{\Omega_{\TT}^g(0,1)}}
    \Omega_{\TT}^g(z,1)/\Omega_{\TT}^g(0,1), \quad k_i=0\notag\\\
    \lambda_{\sqrt{\alpha}k}\Phi_{k_i}^{g,\alpha}(z,1)+k_i\pi g\Psi_{k_i}^{g,\alpha}(z,1)(\TT-1)/(2-\TT),\quad k_i>0,\notag\
    \end{cases}
  \end{eqnarray}  Note that
  $\lambda_{xk}\equiv\pi^2(x k_i)^2+g^2/4,\forall k>0$, and $\sum_{\{k_i\}}$ denotes the sum over all possible
  permutations of $\mathbf{k}$ and the functions
  $\Phi_k^{g,\alpha}(x,y)$ and $\Psi_k^{g,\alpha}(x,y)$ are defined as
%$\alpha=\TT/(2-\TT)$,
\begin{eqnarray}
\!\!\!\!\!\!\!\Phi_k^{g,\alpha}(x,y)&=&\frac{\ee^{-gx/2\alpha}\sin(k\pi
x)-\ee^{-gy/2\alpha}\sin(k\pi y)}{\lambda_{\alpha k}\sqrt{2\lambda_{k}}}\nonumber\\
\!\!\!\!\!\!\!\Psi_k^{g,\alpha}(x,y)&=&\frac{\ee^{-gx/2\alpha}\cos(k\pi
x)-\ee^{-gy/2\alpha}\cos(k\pi y)}{\lambda_{\alpha k}\sqrt{2\lambda_{k}}}.
\label{auxx}
\end{eqnarray}  
Details of the calculations can be found in
\cite{Slapolla_manifestations_2019}. The evaluation of Kullback-Leibler
divergence, $S_{\TT}(t),\mS_{\TT}(t)$ as well as
$\langle\mU(t)\rangle_{\TT}$ cannot be carried out analytically and we
therefore resort to efficient and accurate numerical quadratures. The
results are presented in Fig.~(3) in the Letter.

We performed extensive systematic calculations for different values of $g$ and $N$,
various combinations of $\TT^{\pm}$ as well as for different choices
for tagged particles. All these calculations gave the same qualitative
picture -- \emph{without any exceptions 'uphill' relaxation was always
  faster.} However, we are not able to prove rigorously that this is indeed
always the case. Therefore, for the single file the universally faster
uphill relaxation is only a conjecture.

\section{Non-existence of a unique relaxation asymmetry in multi-well
  potentials and generic conditions when the asymmetry is obeyed} 

In the letter we demonstrated that the relaxation in single-well
potentials is faster uphill than downhill. We have proven that this is
always the case near stable minima and for any reversible
Ornstein-Uhlenbeck process. Based on additional physical arguments we
hypothesized that the asymmetry is a general feature of diffusion in single-well
potentials. However, as we remarked in the Letter, it is not difficult to construct
counterexamples proving that the asymmetry is \emph{not} a general phenomenon in
all reversible ergodic diffusion processes.  

To that end we condider Markovian diffusion in rugged, multi-well
potentials parametrized by
     \begin{equation}
       U(x)=e(ax^6+bx^4+cx^3+dx^2),
       \label{multi}
     \end{equation}
     with some appropriately chosen constants $a,b,c,d$ and $e$. Let
     the dynamics evolve according to
     $\hat{L}_{\TT}=\partial^2_x-\TT^{-1}\partial_xF(x)$,
     where $F(x)=-6e(ax^5+4bx^3+3cx^2+2dx)$
     in a finite domain $a\le x\le b$ with reflecting boundaries,
     and let the corresponding Green's function be the
     solution of the following initial-boundary value problem
\begin{equation}
(\partial_t- \hat{L}_{\TT})G_{\TT}(x,t|x_0)=\delta(x-x_0), \quad
  -(\partial_x-\TT^{-1}F(x))G_{\TT}(x,t|x_0)|_{x=a}=-(\partial_x-\TT^{-1}F(x))G_{\TT}(x,t|x_0)|_{x=b}=0.
  \label{SFPE}
\end{equation}
We solve the Fokker-Plank equation so defined via the Method of Lines. The results for three distinct
parameter sets is shown in Fig.~(\ref{multiwell}).
     \begin{figure}[!ht]
      \begin{center}
       \includegraphics[width=0.85\textwidth]{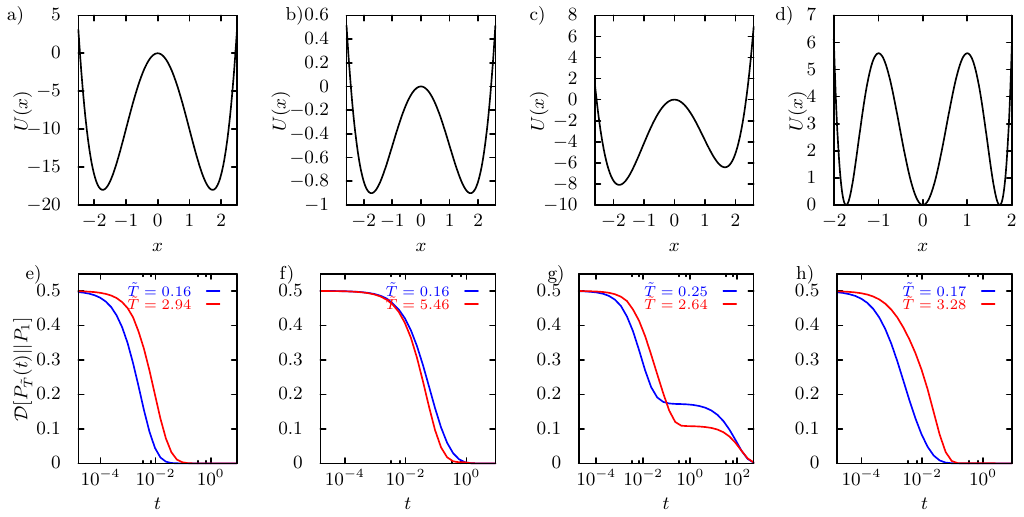}
      \end{center}
      \caption{In panels a,b) and e,f) the potential is a quartic with
        parameter $a=0,
        b=1, c=0, d=-6, e=2$ in panels a and f and$a=0,
        b=1, c=0, d=-6, e=0.1$ in panels b and f. In the asymmetric potential in panels c and
        g with $a=0,b=1,c=0.2,d=-6,e=0.8$ and panels c and f with
        $a=1,b=-6,c=0,d=9,e=1.4$, respectively,  the single-well
        asymmetry-pattern in fact becomes reversed. In a tripple-well
        with equally deep wells the asymmetry is again obeyed despite
        the middle well being wider.}
      \label{multiwell}
     \end{figure}
      We did not perform a systematic analysis of all the possible
      potentials. However, based on our observations it seems that the
      different uphill/downhill relaxation patterns depend on how different
      entropic contributions (i.e. intra-well entropy versus
      inter-well configuration entropy) change qualitatively with temperature for
      potentials with several minima.

      If we focus on the asymmetric case (Fig.~\ref{multiwell}c) we
      find that uphill relaxation is initially always faster, which is
      a direct result of the physical mechanism at play that we present in
      the Letter. At longer time the asymmetry gets inverted by the
      slow inter-well partitioning of probability mass. It is now not
      diffcult to understand that by making the asymmetry smaller we
      will move the crossing point, where the corves intersect, closer
      to $\mathcal{D}[P_{\TT^{\pm}}(t)||P_1^{\eeq}]=0$, such that for a
      sufficiently small asymmetry -- which in the letter we refer to
      \emph{near degeneracy} -- uphill relaxation will eventually be
      faster for all times, for which
      $\mathcal{D}[P_{\TT^{\pm}}(t)||P_1^{\eeq}]$ differs from zero by
      an amount that is not neglgible/detectable. For a formal
      discussion of this situation see below.

It is interesting and important to note that the asymmetry is also
obeyed if the barrier is moderately high, i.e. such that a small but non-neglible
probability mass is located at the barrier (see
Fig.~\ref{1D_double}). However, the quench must then not be too
strong. That is, an 'infinitely' high barrier effecting a strict time-scale separation
betwenn intra-well and inter-well relaxation is \emph{not} a
neccessarry condition for the asymmetry to occur. To demonstrate this
we inspect overdamped relaxation according to Eq.~(\ref{SFPE}) in the
following double well potential
$U(x)=\Delta(x^2-1)^2$, where we choose (in units of $k_{\rm B}T_{\rm
  eq}$) $\Delta=3$ and $F(x)=-U(x)'$.

     \begin{figure}[!ht]
      \begin{center}
       \includegraphics[width=0.5\textwidth]{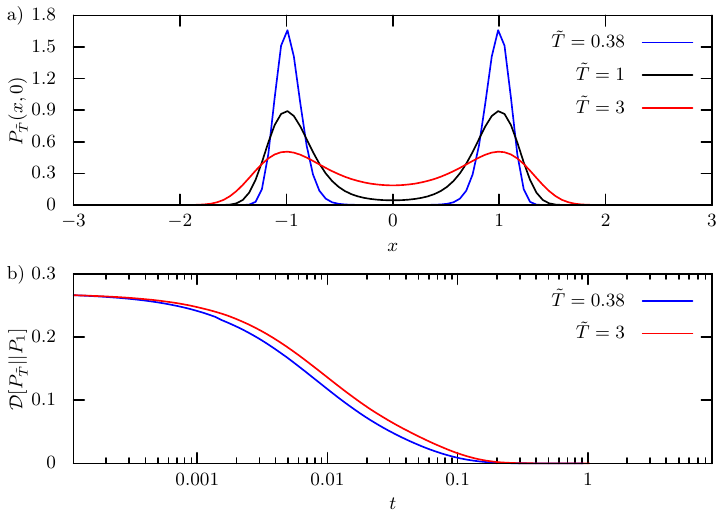}
      \end{center}
      \caption{a) Density of invariant measure at $\tilde{T}=1$
        (i.e. equilibrium probability density), and the equidistant post-quench
        probability densities at $\tilde{T}^+=3$ and
        $\tilde{T}^-=0.38$; b) Corresponding time evolution of the
        Kullback-Leibler divergence depicting that the asymmetry is obeyed.}
      \label{1D_double}
     \end{figure}

In order to check that the observed effect in multi-well potentials
is not an artifact of one-dimensional systems now also inspect
2-dimensional multi-well potentials. To that end we consider 4-well
potentials parametrized by
\begin{equation}
  U(x,y)=\Delta_x(x^2-x_0^2)^2+\Delta_y(y^2-y_0^2),
\label{4well}  
\end{equation}
where energy is measured in units of $k_{\rm B}T_{\rm
  eq}$. We solve the problem by the \emph{Alternating Direction
  Implicit} method (ADI) developed in \cite{SAG_ADI} with 4-step
operator splitting. 
We first focus on the limit of high barriers and quenches
leaving the inter-well partitioning of probability mass unaffected
(see Fig.~\ref{2D_double1}). According to the proposed principle and
prediction the symmetry is obeyes and uphill relaxation is always
faster. 

     \begin{figure}[!ht]
      \begin{center}
       \includegraphics[width=0.85\textwidth]{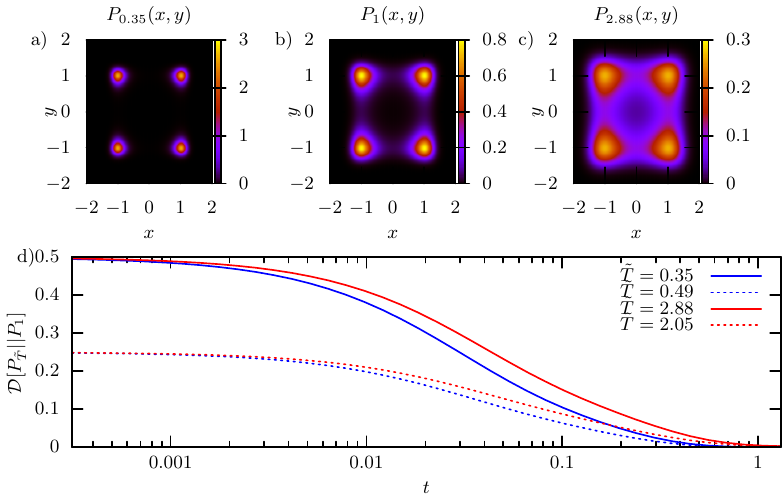}
      \end{center}
      \caption{Density of invariant measure at $\tilde{T}=1$ (b)
        (i.e. equilibrium probability density), and the equidistant post-quench
        probability densities at (c) $\tilde{T}^+=2.88$ and
        (a) $\tilde{T}^-=0.35$ for the 4-well potential in
        Eq.~(\ref{4well}) with parameters $\Delta_x=\Delta_y=3$ and $x_0=y_0=1$.; d) Corresponding time evolution of the
        Kullback-Leibler divergence depicting that the asymmetry is
        obeyed for two pairs of equidistant temperatures.}
      \label{2D_double1}
     \end{figure}

In Fig.~\ref{2D_mpemba} now inspect the case of a moderately high barriers (where the
probability density on top of the barriers does not vanishes). As
expected the asymmetry is obeyed only for sufficiently small quenches,
whereas it becomes violated for stronger quenches (compare full and
dashed lines). The reason for the violation is the fact that the
inter-well redistribution becomes the dominant step for strong
quenches.
     
     \begin{figure}[!ht]
      \begin{center}
       \includegraphics[width=0.85\textwidth]{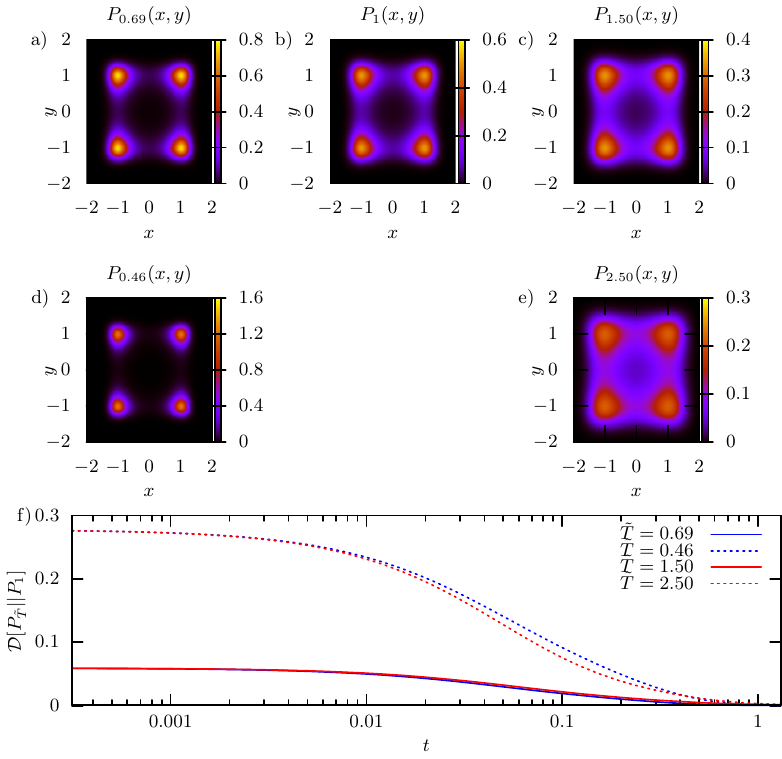}
      \end{center}
      \caption{Density of invariant measure at (a) $\tilde{T}^-=0.69$
        (b) at $\TT=1$, and at (c) $\tilde{T}^+=1.5$, (d)
        $\tilde{T}^-=0.46$ and (e) $\tilde{T}^+=2.5$ corresponding to
        the 4-well potential in 
        Eq.~(\ref{4well}) with parameters $\Delta_x=\Delta_y=2$ and $x_0^2=y_0^2=1$.; b) Corresponding time evolution of the
        Kullback-Leibler divergence depicting that the asymmetry is
        obeyed for small quenches (a and c) and violated for strong
        quenches (d and e).}
      \label{2D_mpemba}
     \end{figure}  

It seems that the asymmetry
      observed in single-well potentials persists in nearly degenerate
      potentials and ceases to exists as soon as the potential
      becomes sufficiently asymmetric with sufficiently deep wells, where entropy
      attains an additional inter-well configurational component, such
      that during relaxation the probability mass becomes
      re-distributed between the wells in an asymmetric manner. 

\subsection{The asymmetry is obeyed in degenerate potentials in the presence of a time-scale separation}

We now provide also formal arguments confirming that the symmetry must
be obeyed in degenerate potentials in the presence of a time-scale
separation. We follow the work of Moro \cite{SMoro}. Since we are
dealing with systems obeying detailed balance the generator of the
relaxation dynamics $\LL$ is always diagonalizable, i.e.
\begin{equation}
\LL_{T}=\sum_{k\ge 0}-\lambda_k\psi_k^{R}(\bx)\psi_k^{L}(\bx_0)
\end{equation}
where $\psi_k^{R}(\bx)$ and $\psi_k^{L}(\bx)$ are the orthonormal right and left
eigenfunctions, respectively, (i.e. $\int
\psi_k^{L}(\bx)\psi_l^{R}(\bx)d\bx=\delta_{kl}$) and $-\lambda_k$ are
real eigenvalues ($\lambda_0=0$ as we have assumed that the potential
is confining and the dynamics is ergodic). The eigenfunctions constitute a complete bi-orthonormal
basis, $\sum_k\psi_k^{L}(\bx)\psi_k^{R}(\bx')=\delta(\bx-\bx')$. As a
result of detailed balance we have
$\psi_k^{R}(\bx)=\e{-U(\bx)/k_{\rm B}T}\psi_k^{L}(\bx)$ and
$\psi_0^{R}(\bx)=P_{T}^{\eeq}\equiv\e{-U(\bx)/k_{\rm B}T}/\int\e{-U(\bx)/k_{\rm B}T}d\bx$ and $\psi_0^{L}(\bx)=1$. Let
$\LL_T^{\dagger}$ be the adjoint (or 'backward') generator, then we have
the pair of eigenproblems
$\LL_T\psi_k^{R}(\bx)=-\lambda_k\psi_k^{R}(\bx)$ and
$\LL^{\dagger}_T\psi_k^{L}(\bx)=-\lambda_k\psi_k^{L}(\bx)$.

The Green's function of the relaxation problem,
$(\partial_t-\LL_T)G_T(\bx,t|\bx_0)=0$ with $G_T(\bx,0|\bx_0)=\delta(\bx-\bx_0)$, decomposes to
\begin{equation}
G_T(\bx,t|\bx_0)=\sum_{k\ge
  0}\psi_k^{R}(\bx)\psi_k^{L}(\bx_0)\e{-\lambda_kt}\to \quad
P_{\TT}(\bx,t)=\int G_1(\bx,t|\bx_0)P_{\TT}^{\eeq}(\bx_0)d\bx_0.
\end{equation}
In presence of a time-scale separation (as a result of the existence
of one or more high energy barriers) the eigenvalue spectrum of $\LL$
has a gap, i.e. $\exists k_{\rm min}$ such that $\lambda_{k_{\rm
    min}+l}\gg k_{\rm min}\forall l\ge 1$. 

Assume now a set of $M$ well-defined deep minima at $\hat{\bx}_{i},
i=1,\ldots,M$. This implies $k_{\rm min}=M-1$. Let us define \emph{localizing functions} $g_i(\bx),i\in[1,M]$
such that
\begin{equation}
c_i^{\eeq}\equiv\int g_i(\bx)P_{1}^{\eeq}(\bx) d\bx\quad \to\quad \int
g_i(\bx)\psi_k^{R}(\bx)=0,\forall k\ge M, 
%\sum_{i=1}^N c_i^{\eeq}= 1-\epsilon 
  \label{cs}
\end{equation}
$c_i^{\eeq}$ are the equilibrium site populations.
The localizing functions therefore by definition separate the
intra-well relaxation from the inter-well 'hopping' of probability
mass. In turn this implies that $g_i(\bx)$ belong to the subspace
$\{\psi_k^{L}(\bx)\},k< M$, i.e.
\begin{equation}
g_i(\bx)=\sum_{k=0}^{M-1}B_{ik}\psi_k^{L}(\bx),\forall i\in[1,M]
\label{local}  
\end{equation}  
and are thus by construction linearly independent but are so far only
defined up to the expansion matrix $\mathbf{B}$. We determine
$\mathbf{B}$ by imposing
that the localizing functions should be localized near only one minimum
$\hat{\bx}_i$ and vanish at all remaining minima,
i.e. $g_i(\hat{\bx}_j)\simeq\delta_{i,j}$. Let the inverse of
$\mathbf{B}$ be $\mathbf{B}^{-1}$, $\mathbf{B}^{-1}\mathbf{B}=\mathbbm{1}$. We
finally fix $g_i(\bx)$ by imposing the following resolution of
identity $\sum_{I=1}^Mg_i(\bx)=1$, which allows us to write
\begin{equation}
\psi_i^{L}(\bx)=\sum_{k=1}^MB^{-1}_{ik}g_k(\bx),\forall i\in[0,M-1],
\label{invert}  
\end{equation}
We now define the
time-dependent population of the localizing sites (i.e. basins)
\begin{equation}
c_i(t)\equiv\int g_i(\bx)G(\bx,t|\bx_0) d\bx.
%\sum_{i=1}^N c_i^{\eeq}= 1-\epsilon 
  \label{cst}
\end{equation}
The fact that $g_j(\bx)$ decompose unity implies that the total site
population is conserved in time, i.e.
\begin{equation}
\sum_{i=1}^Mc_i(t)\equiv\sum_{i=1}^M\int g_i(\bx)G(\bx,t|\bx_0) d\bx=\int
\sum_{i=1}^Mg_i(\bx)G(\bx,t|\bx_0) d\bx=1,
%\sum_{i=1}^N c_i^{\eeq}= 1-\epsilon 
  \label{cons}
\end{equation}
where we have used the fact that the integral and sum commute by
Fubini's theorem (note that we can write the sum as an integral with
respect to a counting measure).
The localizing functions are linearly independent but not
orthonormal. For this purpose we define the $M\times M$ superposition matrix
$\mathbf{S}$ with elements $S_{ij}\equiv\int
g_i(\bx)P_{1}^{\eeq}(\bx)g_j(\bx)d\bx$ such that we can re-write the
equilibrium site
population as
\begin{equation}
c_i^{\eeq}=\int g_i(\bx)P_{1}^{\eeq}(\bx)\sum_{j=1}^Mg_j(\bx) d\bx=\sum_{j=1}^MS_{ij}.
\end{equation}
We now define a projection operator projecting onto the space of
localizing functions
\begin{equation}
\hat{\mathrm{P}}q(\bx)\equiv\sum_{i=1}^Mq_ig_i(\bx), \quad q_i\equiv
\sum_{k=1}^MS^{-1}_{i,k}\int
g_i(\bx)P_{1}^{\eeq}(\bx)q(\bx)d\bx.
\end{equation}
The time evolution of site
populations then follows 
\begin{eqnarray}
\frac{dc_j(t)}{dt}&=&\int g_j(\bx)\partial_tG_{\TT}(\bx,t|\bx_0)d\bx=\int
g_j(\bx)\LL_1G_{\TT}(\bx,t|\bx_0)d\bx=\int
G_{\TT}(\bx,t|\bx_0)\LL_1^{\dagger} g_j(\bx)d\bx\\
&\equiv&\int
G_{\TT}(\bx,t|\bx_0)\hat{\mathrm{P}}\LL_1^{\dagger} g_j(\bx)d\bx=\sum_{k,i}c_k(t)S^{-1}_{k,i}\int
g_i(\by)\LL_1^{\dagger}
P_{1}^{\eeq}(\by)g_j(\by)d\by\equiv\sum_{k,i}c_k(t)S^{-1}_{k,i}\Gamma_{ij},
\label{elMas}
\end{eqnarray}
where in the second line we used the fact that $\LL_1^{\dagger}
g_j(\bx)$ already lies in the subspace of localizing functions
(because $\LL^{\dagger}_T\psi_k^{L}(\bx)=-\lambda_k\psi_k^{L}(\bx)$ and Eq.~(\ref{local})) and the
projection operator projects back onto said subspace. 
By defining $\mathbf{c}(t)=(c_i(t),\ldots,c_M(t))^T$
we recognize from Eq.~(\ref{elMas}) that the site populations obey the Markovian master equation
\begin{equation}
\frac{d}{dt}\mathbf{c}(t)=\mathbf{M}\mathbf{c}(t), \quad M_{jk}\equiv \sum_{i}S^{-1}_{k,i}\Gamma_{ij},
\label{master}  
\end{equation}
where it can be shown that the transition rates entering $\mathbf{M}$
obey detailed balance \cite{SMoro}.
It is obvious that $\mathbf{M}\mathbf{c}^{eq}=0$ and therefore an
equilibrated site-population does not lead to any inter-well dynamics.
The evolution upon a temperature quench from $\TT$ follows from the
evolution of the Green's function, i.e. $P_{\TT}(\bx,t)=\int
G_1(\bx,t|\bx_0)P_{\TT}^{\eeq}(\bx_0)d\bx_0$. Therefore, any quench
that will leave the site populations given the potential $U(\bx)$
and Fokker-Planck operator $\LL_1$ ($\LL_1^{\dagger}$ respectively) almost unaffected, i.e.
\begin{equation}
\mathbf{M}\mathbf{c}(0)\simeq 0, \quad {\rm where} \quad c_i(0)=\int g_i(\bx)P_{\TT}(\bx,t=0|\bx_0) d\bx\equiv\int g_i(\bx_0)P_{\TT}^{\eeq}(\bx_0) d\bx_0,
\end{equation}  
will lead to a faster uphill relaxation as a direct consequence of the
fact that the intra-well (i.e. in each individual well) relaxation is faster uphill. The above arguments can be
arranged in a form that is fully rigorous, but since the argumentation is essentially
straightforward, we do not find it necessary to do so.

\subsection{Small local modulations do not spoil the asymmetry}
As stated in the Letter, small local modulations of the potential ($\ll
k_{\rm B}T_{\eeq}$) do not affect the asymmetry as longs as the uphill
 quench is sufficiently small to assure that the modulation is $\lesssim
k_{\rm B}T^-$
.  Then the system relaxes
similarly as in a perfectly smooth single well. To demonstrate that this is indeed
the case we inspect the relaxation from equidistant quenches in the
potential  in Eq.~(\ref{4well}) with $\Delta_x=\Delta_y=2$ and
$x_0^2=y^2=0.4$ depicted in Fig.~\ref{2D_cherry}.
If, however, we make the quench too severe, such that the local
modulations of the potential effectively reach $|\Delta U(\bx)|\gtrsim
k_{\rm B}T^-$ the asymmetry would become violated and the curves will
eventually cross, rendering downhill relaxation faster.      
     
      \begin{figure}[!ht]
      \begin{center}
       \includegraphics[width=0.85\textwidth]{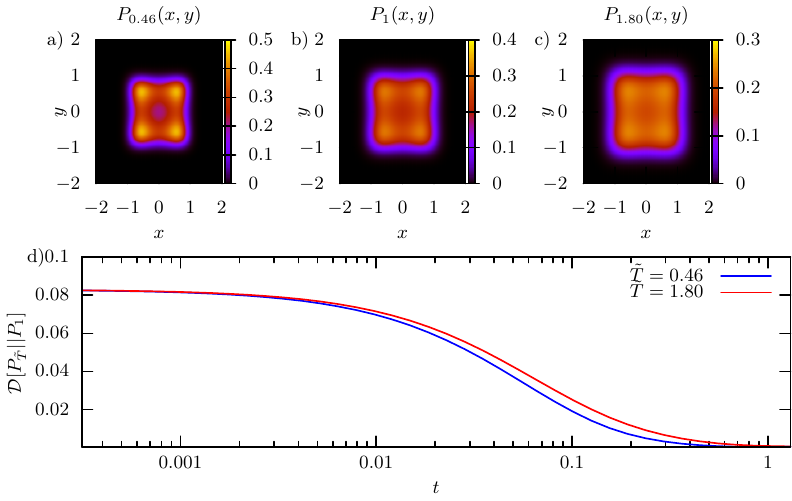}
      \end{center}
      \caption{a) Density of invariant measure at $\tilde{T}=1$
        (i.e. equilibrium probability density), and the equidistant post-quench
        probability densities at $\tilde{T}^+=1.8$ and
        $\tilde{T}^-=0.46$ for the 4-well potential in
        Eq.~(\ref{4well}) with parameters $\Delta_x=\Delta_y=2$ and $x_0^2=y_0^2=0.4$.; b) Corresponding time evolution of the
        Kullback-Leibler divergence depicting that the asymmetry is
        obeyed.}
      \label{2D_cherry}
     \end{figure}  

A a final example we focus on an asymmetric quadruple-well with a pair
of high barriers and a pair of low barriers (the latter creating a
small local modulation of the potential). In particular, we consider
the relaxation in the potential given in Eq.~(\ref{4well}) with
parameters $\Delta_x=3,\Delta_y=2$ and $x_0=0.5,y_0=1$ and inspect in
Fig.~\ref{2D_asym} the
following pairs of thermodynamically equidistant temperatures,
$\tilde{T}^-=0.8,\tilde{T}^+=1.25$ and $\tilde{T}^-=0.5,\tilde{T}^+=2.$.      
      
       \begin{figure}[!ht]
      \begin{center}
       \includegraphics[width=0.85\textwidth]{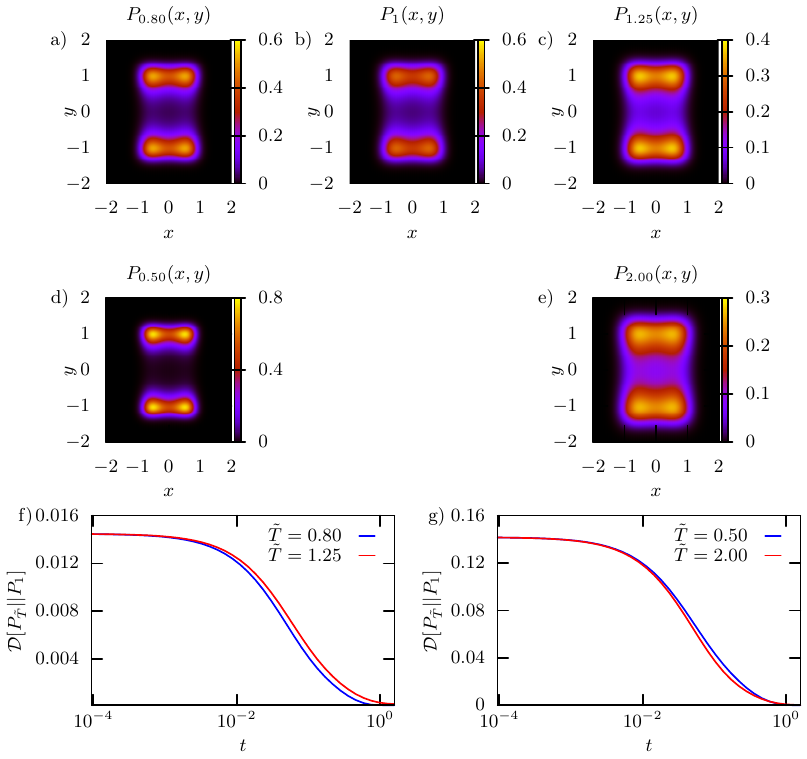}
      \end{center}
      \caption{b) Density of invariant measure at $\tilde{T}=1$
        (i.e. equilibrium probability density), and two pairs of equidistant post-quench
        probability densities at $\tilde{T}^+=1.25$ (c) and $2$ (e) and
        corresponding equidistant
        $\tilde{T}^-=0.8$ (a) and $0.5$ (d), respectively, for the 4-well potential in
        Eq.~(\ref{4well}) with parameters $\Delta_x=3,\Delta_y=2$ and $x_0=0.5,y_0=1$.; f-g) Corresponding time evolution of the
        Kullback-Leibler divergence depicting that the asymmetry is
        obeyed for small enough quenches but becomes violated (in the
        form of an Mpemba-like effect) for stronger quenches.}
      \label{2D_asym}
     \end{figure}     

As anticipated, the uphill relaxation is faster for sufficiently small
quenches (see  Fig.~\ref{2D_asym}f) and becomes violated for stronger
quenches (see  Fig.~\ref{2D_asym}g), where the Kullback-Leibler
divergences also display an Mpemba-like effect (see also next
section).

     \section{Generalized Mpemba effect for non-Markovian dynamics}
     A phenomenon closely linked to relaxation from a quench is the
     so-called Mpemba effect
     \cite{Se.b._mpemba_d.g._osborne_cool?_1979, Sjeng_mpemba_2006,
       Skatz_when_2009}, according to which a liquid upon cooling can
     freeze faster if its initial temperature is higher. Meanwhile the
     phenomenon has been extended to cover relaxation processes in
     different systems: magneto-resistors \cite{Schaddah_overtaking_2010}, carbon-nanotubes \cite{Sgreaney_mpemba-like_2011}, polymers crystallization \cite{Shu_conformation_2018}, clathrate hydrates \cite{Sahn_experimental_2016}, granular systems \cite{Slasanta_when_2017} and spin glasses \cite{Sjanus_collaboration_mpemba_2019}.  Recently theoretical generalizations of it for Markovian observables have been published \cite{Slu_nonequilibrium_2017, Sklich_solution_2019, Sklich_mpemba_2019}. Not long ago the phenomenon was also adressed in more detail in the
     context of Markovian stochastic dynamics
     \cite{Slu_nonequilibrium_2017,Sklich_mpemba_2019}. 

     Here we further extend the concept of the Mpemba effect to
     projected, non-Markovian observables. As before we focus on the
     distance of two different generic configurations displaced from
     equilibrium at $t=0$, such that one is displaced further away
     than the other, whereas the time-evolution of the entire system
     is governed by the same Fokker-Planck operator. In this setting,
     there are cases, where the more distant initial configuration
     reaches equilibrium faster that the closer one. One can observe
     this effect in the two systems analyzed in the Letter (see Fig. \ref{equilibration figure}).
     \begin{figure}[!ht]
     \begin{center}
     \includegraphics[width=0.5\textwidth]{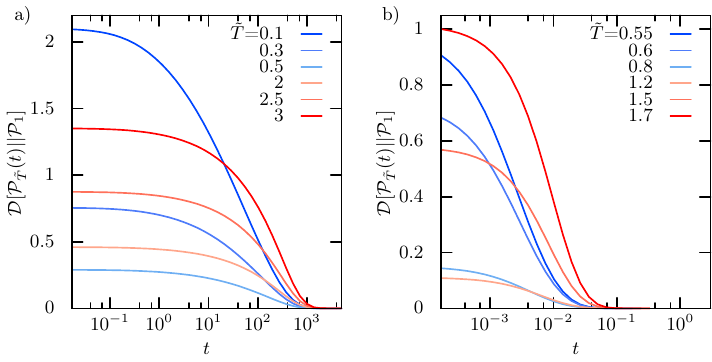}
      \caption{In the left panel we show time dependence of the
        Kullback-Leibler divergence for a Gaussian Chain of 100 beads,
        while the right panel depicts a Single File of 10 particles
        ($g=5$). In both cases we focus on non-Markovian observables,
        the end-to-end distance for the Gaussian chain and on the 7th particle of the single file, respectively. For some pairs of initial temperatures we notice the generalized Mpemba effect: systems that start further away from the equilibrium approach the equilibrium configuration faster.}
   \label{equilibration figure}
   \end{center}
  \end{figure}
    It is worth to stress that the presence of the generalized Mpemba
    effect not only depends on the system and the initial condition
    (like in the Markovian case) but also on the particular type of
    projection In Fig. \ref{mpembaondemand} we demonstrate, on hand of
    the same system (a tilted single file of 5 particles) from the
    same pair of pre-quench temperatures, that we can switch the
    generalized Mpemba effect on and off by simply changing the particle we are tagging.
     \begin{figure}[!ht]
     \begin{center}
      \includegraphics[width=0.5\textwidth]{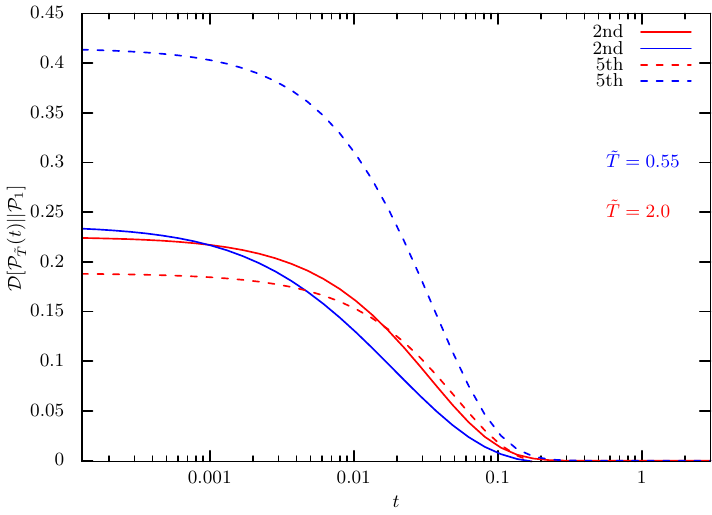}
      \caption{Kullback-Leibler divergences for a single file of 5
        particles with $g=1$. If we tag the 2nd particle (solid lines)
        or the 5th (dashed lines) for the same pair of pre-quench
        temperatures one projection displays the generalized Mpemba
        effect while the other one does not.}
      \label{mpembaondemand}
     \end{center}
     \end{figure}

     \newpage     
%      \bibliographystyle{apsrev4-1.bst}
%      \bibliography{bib_SM}
      
\end{document}